  \DeclareMathOperator\CSP{CSP}
  \DeclareMathOperator\Pol{Pol}
  \DeclareMathOperator\QCSP{QCSP}
\renewcommand{\le}{\leqslant}
\theoremstyle{definition}
\theoremstyle{plain}
\newtheorem{thm}{Theorem}
\newtheorem{lem}[thm]{Lemma}
\newtheorem{cor}[thm]{Corollary}
\newtheorem*{THMReductionTHM}{Theorem~\ref{ReductionTHM}}
\newtheorem*{THMReductionToCSPTHM}{Theorem~\ref{ReductionToCSPTHM}}
\newtheorem*{THMReductionToQCSPTHM}{Theorem~\ref{ReductionToQCSPTHM}}
\newtheorem*{THMReductionToCSPNoConstantsTHM}{Theorem~\ref{ReductionToCSPNoConstantsTHM}}
\newtheorem*{THMPGPClassificationTHM}{Theorem~\ref{PGPClassificationTHM}}
\newtheorem*{LemmaCSPReductionLemma}{Lemma~\ref{CSPReductionLemma}}
\newtheorem*{LemmaPiTwoReductionLemma}{Lemma~\ref{PiTwoReductionLemma}}
\newtheorem*{LemmaReduceNumberOfQuantifiersLemma}{Lemma~\ref{ReduceNumberOfQuantifiersLemma}}
\title{The complexity of the Quantified CSP having the polynomially generated powers property.}
\author{Dmitriy Zhuk}
\begin{document}
\maketitle
\begin{abstract}
It is known that if an algebra of polymorphisms of the constraint language has the 
Polynomially Generated Powers (PGP) Property then 
the Quantified CSP can be reduced to the CSP over the same constraint language with constants.
The only limitation of this reduction is that it is applicable only for the constraint languages with constants. 
We drastically simplified the reduction and generalized it for constraint languages without constants. 
As a result, we completely classified the complexity of the QCSP for constraint languages having the PGP property. 
\end{abstract}
\section{Introduction}

Let $\Gamma$ be a finite set of relations on a set $A$.
\emph{The Constraint Satisfaction Problem over the constraint language} $\Gamma$, denoted by $\CSP(\Gamma)$, is
the following decision problem:
given a formula of the form 
$$R_{1}(v_{1,1},\dots,v_{1,n_1})\wedge
\dots
\wedge R_{s}(v_{s,1},\dots,v_{s,n_s}),
$$
where $R_{1},\dots,R_{s}\in\Gamma$ and each $v_{i,j}\in\{x_1,\dots,x_{n}\}$;
decide whether the formula is satisfiable.
A natural generalization of the CSP is the \emph{Quantified Constraint Satisfaction Problem} ($\QCSP$), 
where we allow to use both existential and 
universal quantifiers.
Formally, for a constraint language $\Gamma$,
$\QCSP(\Gamma)$
is the problem 
to evaluate 
a sentence of the form 
$$\forall x_1 \exists y_1 \dots \forall x_n \exists y_n \ R_{1}(v_{1,1},\ldots,v_{1,n_{1}})
\wedge
\dots
\wedge
R_{s}(v_{s,1},\ldots,v_{s,n_{s}}),$$
where $R_{1},\dots,R_{s}\in \Gamma$, 
and $v_{i,j}\in \{x_{1},\dots,x_{n},y_1,\dots,y_n\}$ for every $i,j$ (see \cite{BBCJK,hubie-sicomp,Meditations,QC2017}).

Unlike CSP, in general the Quantified CSP can be PSpace-hard and the complexity of $\QCSP(\Gamma)$ is not known.
In this paper we consider one of the main ideas that allows to reduce 
the QCSP to the usual CSP and therefore reduce the complexity. 
To explain the idea let us consider an easier problem. 
By $\QCSP^{\Pi_{2}}$ we denote the $\Pi_{2}$-restriction of 
the QCSP, that is, given a sentence
\begin{equation}\label{qcspinstance}
    \forall x_1 \dots \forall x_n \exists y_1 \dots \exists y_m \;\Phi
\end{equation}
where $\Phi$ is a conjunction of relations from $\Gamma$;
decide whether it holds.
To solve this problem it suffices to check that 
$\Phi$ has a solution for 
all evaluations of $x_{1},\dots,x_{n}$, hence 
it suffices to look at (the conjunction of) $|A|^n$ instances of $\CSP(\Gamma^{*})$, 
where 
$\Gamma^{*}=\Gamma\cup\{(x=a)\mid a\in A\}$. 
It is not hard to show (see \cite{AU-Chen-PGP}) that, if $A^n$ can be generated under $\Pol(\Gamma)$ from some subset $\Sigma \subseteq A^n$, 
then it is sufficient to check only evaluations from $\Sigma$, 
and to solve $|\Sigma|$ instances of $\CSP(\Gamma^{*})$. 
Hence, if $\Sigma$ can be chosen small enough and can be efficiently computed then 
this simple idea gives a polynomial reduction to the CSP. 

\textbf{Example.} Suppose $A = \{0,1\}$ and $\Gamma$ consists of linear equations in $\mathbb Z_{2}$.
Let us check that the instance (\ref{qcspinstance}) holds 
for $(x_{1},\dots,x_{n}) = (0,\dots,0)$
and $(x_{1},\dots,x_{n}) = (0,\dots,0,1,0,\dots,0)$ for any position of 1. 
To do this we just solve 
$n+1$ CSP instances of $\CSP(\Gamma^{*})$ and each of these instances 
is a system of linear equations and can be solved in polynomial time. 
If at least one of the instances does not have a solution, then the instance 
(\ref{qcspinstance}) does not hold. 
Assume that all of them are satisfiable, 
then consider the relation $\Delta(x_1,\dots,x_{n}) = \exists y_1 \dots \exists y_m \; \Phi.$
Since $\Gamma$ is preserved by $x+y+z$, $\Delta$ 
is also preserved by $x+y+z$.
Applying this operation to the tuples 
$(0,0,\dots,0),(1,0,\dots,0),(0,1,0,\dots,0), \dots, (0,0,\dots,0,1)\in \Delta$ 
coordinatewise we derive that 
$\Delta = \{0,1\}^{n}$, that is $\Delta$ contains all tuples 
and 
(\ref{qcspinstance}) holds. Thus, we build a reduction of $\QCSP^{\Pi_2}(\Gamma)$
to $\CSP(\Gamma)$ and proved that $\QCSP^{\Pi_2}(\Gamma^{*})$ is solvable in polynomial time.

This idea can be generalized as follows. 
We say that a set of operations $F$ (or an algebra $(A;F)$) has 
the \emph{polynomially generated powers (PGP)}  property
if there exists a polynomial $p(n)$ 
such that 
$A^n$ can be generated from 
$p(n)$ tuples using operations of $F$.
Another behaviour that might arise is that there is an exponential function $f$ so that the smallest generating sets 
for $A^{n}$ require size at least $f(n)$. 
In this case we say that it has the \emph{exponentially generated powers (EGP)} property. 

\begin{thm}\cite{ZhukGap2015}\label{PGPEGPDichotomy}
Every finite algebra $(A;F)$ 
either has the PGP property, or has the EGP property.
\end{thm}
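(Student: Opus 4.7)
Let $g(n)$ denote the minimum cardinality of a subset $S\subseteq A^{n}$ that generates $A^{n}$ under the term operations of the algebra $(A;F)$. The plan is to show that $g$ has either polynomial or exponential growth, with no intermediate regime: either $g(n)\le p(n)$ for some polynomial $p$, or $g(n)\ge c^{n}$ for some constant $c>1$ and all sufficiently large $n$.

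First, I would reduce to a convenient algebraic setting. Passing to the idempotent reduct (or adjoining constants) can only make generation easier, and preserves the polynomial-versus-exponential distinction up to polynomial factors; so it is harmless to assume the algebra is idempotent. Next, I would analyze a minimal generating set $S\subseteq A^{n}$ by viewing each tuple as a coloring of $\{1,\dots,n\}$ by elements of $A$ and grouping coordinates according to the "local type" they realize across $S$. Minimality of $S$ translates into strong combinatorial constraints on which local types can coexist, controlled entirely by which binary (and low-arity) relations on $A$ are invariant under $F$.

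The heart of the argument is a sharp dichotomy on term operations. I would isolate a finite list of candidate "compressing" terms, roughly of the shape "in every coordinate, output either the first or the second argument depending on a bounded side-condition", whose presence allows one to generate $A^{n}$ from $O(n^{k})$ tuples by an explicit iterative composition. Conversely, when no such term is available, I would exhibit a family of invariant relations on $A^{n}$, parametrized by combinatorial patterns (for example, "at most $k$ coordinates deviate from a fixed base tuple"), whose number of $F$-orbits grows exponentially with $n$; any generating set must hit each such orbit, forcing $g(n)\ge c^{n}$. Identifying the right finite list of terms and the matching family of invariants, and verifying that exactly one of the two situations always occurs, is the core technical task.

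The main obstacle is ruling out the intermediate regime. This requires a finite algebraic witness: a number $k$ depending only on $|A|$ such that a sub-exponential growth rate for $g(n)$ already forces existence of a term of the requisite compressing shape, uniformly in $n$. Establishing this bridge between asymptotic growth on one side and existence of a concrete term on the other is where the finiteness of $A$ and the finite rank of any term-reduct of $(A;F)$ become essential; I expect the argument to proceed by a compactness / Ramsey-type extraction, applied to a hypothetical sequence of generating sets of intermediate size, to produce the term directly.
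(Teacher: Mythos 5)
This theorem is cited in the paper from \cite{ZhukGap2015} and is not proved there, so there is no in-paper proof to compare against; I will evaluate the plan on its own terms against what is known about the actual argument.

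The proposal has several genuine gaps. First, the opening reduction is wrong as stated: passing to the \emph{idempotent reduct} removes term operations and therefore makes generating $A^n$ \emph{harder}, not easier, while \emph{adjoining constants} adds operations and makes it easier. These two moves pull in opposite directions, and neither obviously preserves the polynomial/exponential dichotomy without a concrete argument; the theorem is stated for arbitrary finite algebras and the known proof does not pass through idempotency. Second, the proposed "compressing term" step is never made concrete, and in particular it does not land on the key structural notion that the actual proof identifies and that the rest of this paper uses crucially, namely $k$-\emph{switchability}: $A^n$ is generated from the tuples with at most $k$ switches (Theorem~\ref{PGPMeansSwitchable}). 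Without pinning down this finite combinatorial witness, the bridge from "sub-exponential growth of $g(n)$" to "existence of a term of a specific shape" is precisely the hard part, and invoking a Ramsey/compactness extraction as a black box does not supply it; you acknowledge this is "the core technical task" but leave it unresolved.

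Third, the lower-bound argument for the EGP side is imprecise. Counting $F$-orbits and asserting "any generating set must hit each orbit" does not by itself yield $g(n)\ge c^n$: tuples from one orbit can generate tuples in another. The correct mechanism is to exhibit exponentially many \emph{proper invariant subuniverses} $R_1,\dots,R_N$ of $A^n$ whose complements $A^n\setminus R_i$ are pairwise disjoint; then any generating set must meet every complement, forcing $|S|\ge N$. Identifying that invariant family (in Zhuk's proof it comes from a specific configuration of subsets of $A$) is exactly what your plan hand-waves past. As written, the proposal is a plausible-sounding outline but it misses the central structural invariant (switchability), mis-states the reduction to idempotency, and does not correctly formulate the counting argument that produces the exponential lower bound.
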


Moreover, it was shown that a polynomial size generating set in the PGP case 
can be chosen very simple and efficiently computable. 
We say that 
a tuple $(a_1,\dots, a_n)$ has \emph{a switch} in the $i$-th position if
$a_{i-1}\neq a_{i}$. For example, 
the tuple $(1,1,0,2,0,0,0)$ has 3 switches.
We say that 
 an algebra $(A;F)$ is \emph{$k$-switchable} if 
 $A^{n}$ is generated from all tuples with at most $k$ switches. 
We say that 
 an algebra $(A;F)$ is \emph{switchable} if 
 it is $k$-switchable for some $k$.
  Since we have only polynomially many tuples with at most $k$ switches, 
 switchability implies the PGP property. 
 It turned our that the backward implication also holds.
 
 \begin{thm}\cite{ZhukGap2015}\label{PGPMeansSwitchable}
A finite algebra $(A;F)$ has the PGP property if and only if 
it is switchable.
\end{thm}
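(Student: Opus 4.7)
The proof divides into two directions. The easy direction is that switchability implies PGP: if $(A;F)$ is $k$-switchable then $A^n$ is generated by the set of tuples having at most $k$ switches, and the number of such tuples is bounded by $|A|\cdot\binom{n}{k}\cdot(|A|-1)^{k}$, a polynomial of degree $k$ in $n$. Since this polynomial-size set generates, $(A;F)$ has the PGP property.

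For the hard direction, the plan is to argue by contradiction via Theorem~\ref{PGPEGPDichotomy}. Suppose $(A;F)$ has the PGP property but is not switchable. Then for every $k\in\N$ there exist $n_k$ and a proper subalgebra $R_k\leq (A;F)^{n_k}$ containing every tuple with at most $k$ switches; fix a witness tuple $t_k\in A^{n_k}\setminus R_k$, which necessarily has strictly more than $k$ switches. I would use the family $\{(R_k,t_k)\}_k$ to exhibit generating sets of exponential size in some finite power of $A$, contradicting PGP by Theorem~\ref{PGPEGPDichotomy}.

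Concretely, for a large fixed $k$, consider the power $A^{N}$ with $N=n_k\cdot m$, regarded as $m$ consecutive blocks of length $n_k$. Each block can be set either to a "safe" pattern lying in $R_k$ or to the forbidden pattern $t_k$, giving $2^m$ distinct admissible block-configurations. The proper-subalgebra constraint imposed by $R_k$ is supposed to localize to each block, so that $t_k$-patterns in different blocks are independent. A counting argument would then show that any generating set for $A^N$ under the coordinatewise action of $F$ must contain at least $2^{\Omega(m)}$ tuples, which is exponential in $N$.

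The main obstacle is formalizing the "independence of blocks" claim: the constraint $R_k$ must genuinely localize so that no term operation of $F$ can synthesize many block-patterns from a small set of generators by mixing blocks together. This requires a careful choice of witness tuples $t_k$ and an encoding that survives application of arbitrary polymorphisms, likely bootstrapped from the fact that $k$ may be taken arbitrarily large. Establishing this localization is the crux of the PGP-to-switchable direction; once it is in place, the exponential lower bound follows by elementary counting and Theorem~\ref{PGPEGPDichotomy} delivers the contradiction with the assumed PGP property.
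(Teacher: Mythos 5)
This theorem is cited from \cite{ZhukGap2015}; the paper itself gives no proof, so there is nothing internal to compare your attempt against. What follows is therefore an assessment of the attempt on its own terms.

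Your easy direction (switchable $\Rightarrow$ PGP) is correct: the number of $n$-tuples with at most $k$ switches is $O(n^k)$, so a $k$-switchable algebra trivially has PGP.

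The hard direction as you present it has a genuine gap, which you yourself flag. Two concrete problems. First, using Theorem~\ref{PGPEGPDichotomy} as a black box does not really save work: to derive a contradiction from ``PGP and not switchable'' via that dichotomy you must show that non-switchability forces EGP, and that is precisely the substantive content of the theorem you are trying to prove --- in \cite{ZhukGap2015} the dichotomy and the switchability characterization are established together, not independently. Second, and more seriously, the ``block independence'' claim at the heart of your counting argument is not automatic and does not follow from the mere fact that $t_k\notin R_k$. A generating set for $A^{N}$ need not be block-local: a generator may have the $t_k$-pattern in a single block and safe patterns elsewhere, and there is nothing preventing a term operation of $F$ from combining a handful of such generators to synthesize $t_k$-patterns in many blocks at once. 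Ruling this out requires detailed structural information about how the clone acts, not just closure of $R_k$; producing the right witness relations and showing they resist compression is exactly where the real argument lives, and your sketch leaves it entirely open. The proof in \cite{ZhukGap2015} proceeds by a much more intricate structural analysis of the algebra rather than a generic block-and-count scheme.

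In short: the easy direction is fine, but the hard direction is an honest but unexecuted plan whose central step (locality/independence of blocks) would not go through as stated.
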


Thus, as a generating set of polynomial size we can take the set of all tuples with at most $k$ switches, which gives a polynomial 
reduction of $\QCSP^{\Pi_{2}}(\Gamma)$ 
to $\CSP(\Gamma^{*})$ if $\Pol(\Gamma)$ has the PGP property. 

It turns out that a similar reduction can be made for the general $\QCSP(\Gamma)$.

\begin{thm}\cite{AU-Chen-PGP,LICS2015}\label{PGPImpliesRecutionOld}
Suppose $\Gamma$ contains all constants relations
and $\Pol(\Gamma)$ has the PGP property. Then 
$\QCSP(\Gamma)$ is polynomially reducible to $\CSP(\Gamma)$.
\end{thm}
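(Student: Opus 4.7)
The plan is to convert a QCSP instance into a polynomial-size CSP instance that ``plays out'' the game on a polynomial family of universal evaluations, and then to verify correctness using switchability (Theorem~\ref{PGPMeansSwitchable}).

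By Theorem~\ref{PGPMeansSwitchable}, since $\Pol(\Gamma)$ has the PGP property, it is $k$-switchable for some constant $k$. Let $T \subseteq A^n$ denote the set of tuples with at most $k$ switches; then $T$ generates $A^n$ under $\Pol(\Gamma)$ and $|T|$ is polynomial in $n$. Given an instance
\[ \Psi = \forall x_1 \exists y_1 \dots \forall x_n \exists y_n \, \Phi, \]
I would build a CSP instance $\Psi^*$ as follows. For each $\vec{a} = (a_1,\dots,a_n) \in T$, introduce fresh variables $Y_1^{\vec{a}},\dots,Y_n^{\vec{a}}$ and include a copy of $\Phi$ with each $x_i$ replaced by the constant $a_i$ (expressible via the unary constant relations in $\Gamma$) and each $y_i$ replaced by $Y_i^{\vec{a}}$. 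Then, whenever $\vec{a},\vec{b} \in T$ agree on the first $i$ coordinates, add the equality constraint $Y_i^{\vec{a}} = Y_i^{\vec{b}}$. The resulting instance has polynomial size.

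The forward implication is immediate: from Skolem functions $f_i : A^i \to A$ witnessing that $\Psi$ holds, set $Y_i^{\vec{a}} := f_i(a_1,\dots,a_i)$. The backward implication is the main obstacle. Given a satisfying assignment to $\Psi^*$, the equality constraints already yield a partial Skolem function $f_i$ on every $i$-tuple with at most $k$ switches. To extend $f_i$ coherently to all of $A^i$, I plan to produce a \emph{prefix-preserving decomposition}: a polymorphism $g \in \Pol(\Gamma)$ of polynomial arity $m$ together with a map $\sigma : A^n \to T^m$ such that $g(\sigma(\vec{a})) = \vec{a}$ for every $\vec{a}$ and, crucially, the first $i$ coordinates of each component of $\sigma(\vec{a})$ depend only on $(a_1,\dots,a_i)$. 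Given such $(\sigma,g)$, define
\[ f_i(a_1,\dots,a_i) \; := \; g\bigl(Y_i^{\vec{a}^1},\dots,Y_i^{\vec{a}^m}\bigr), \quad \text{where } \sigma(\vec{a}) = (\vec{a}^1,\dots,\vec{a}^m), \]
for any extension $\vec{a} \in A^n$ of the prefix $(a_1,\dots,a_i)$. Prefix-preservation makes this well-defined and coherent across all $i$, while applying $g$ coordinatewise to the tuples that satisfy $\Phi$ under each $\vec{a}^j$ shows that $\{f_i\}$ is a winning Skolem strategy for $\Psi$.

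The hardest step will be constructing the prefix-preserving decomposition $(\sigma,g)$. Theorem~\ref{PGPMeansSwitchable} only gives a static generation of $A^n$ from $T$, whereas here the game-tree structure requires canonical, history-respecting generators. I expect to obtain them by a careful iterative refinement of the switchability proof, processing positions $1,\dots,n$ in order and selecting the $k$-switchable generators consistently as each new coordinate is revealed. The assumption that $\Gamma$ contains all constant relations is essential both for substituting universal values and for pp-expressing the equality constraints on the $Y_i^{\vec{a}}$; removing this assumption is precisely the obstruction the rest of the paper aims to address.
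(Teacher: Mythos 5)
Your proposal follows essentially the route of Chen's original argument rather than the paper's: you build one large CSP instance that plays out the game on the polynomially many universal evaluations with few switches, then try to recover a full Skolem strategy from a satisfying assignment. The paper instead never builds such an instance; it collapses groups of universal quantifiers via the $\omega_{n_1,\dots,n_k}$ transformation, proves (Theorem~\ref{ReductionTHM}) that the original sentence is equivalent to the conjunction of these polynomially many collapsed sentences (each with at most $2r+1$ universal quantifiers), and then eliminates the bounded number of universal quantifiers with Lemma~\ref{CSPReductionLemma}. In that proof, $r$-switchability is used only once, and only to show that an explicitly defined relation $R$ (associated with the $\zeta$-form $\zeta(\Psi)$) is the full relation; no ``history-respecting'' generating scheme is needed.

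The genuine gap in your plan is the \emph{prefix-preserving decomposition} $(\sigma,g)$. You need a single polymorphism $g$ of polynomial arity $m$ and a map $\sigma\colon A^n\to T^m$ with $g(\sigma(\vec a))=\vec a$ such that, for every component, the first $i$ coordinates depend only on $(a_1,\dots,a_i)$. This is not a consequence of $r$-switchability as stated: Theorem~\ref{PGPMeansSwitchable} only asserts that each tuple of $A^n$ lies in the subalgebra generated by the low-switch tuples $T$, with no control over which generators are used, no uniformity of the combining term across tuples, and certainly no compatibility across prefixes. Producing such a $(\sigma,g)$ is precisely what Chen's notion of switchability (reactive composition / collapsibility) packages, and passing from the ``static'' $r$-switchability of \cite{ZhukGap2015} to that adaptive form is the nontrivial content of \cite{BarnyLICS2015}. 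Your sentence ``I expect to obtain them by a careful iterative refinement of the switchability proof'' is exactly where a new argument is required; without it the backward implication is unproved. Also note a smaller issue: you must also justify that the same $g$ works for every prefix length $i$ simultaneously when you apply it coordinatewise to the tuples $(\vec a^{\,j}, Y_1^{\vec a^{\,j}},\dots,Y_n^{\vec a^{\,j}})$ to conclude that $\Phi$ holds under $\vec a$ — this is fine once $(\sigma,g)$ exists, but it is part of why a single uniform $g$ is required rather than per-tuple generators. In short, your outline matches the known hard route through adaptive switchability and leaves its central lemma unestablished, whereas the paper's $\omega$-collapse sidesteps that lemma entirely.
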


The proof of this theorem was rather complicated. 
First, in \cite{AU-Chen-PGP} Hubie Chen proved that the existence of a switchable algebra (with a slightly different type of switchability)
allows to do this reduction for any constraint language with constants.
Combining the equivalence of the two types of switchability \cite{BarnyLICS2015}
with Theorem \ref{PGPMeansSwitchable} we derive
Theorem \ref{PGPImpliesRecutionOld}. Another disadvantage of the Hubie Chen reduction is that 
it works only for constraint languages with constants.

Nevertheless, Theorem \ref{PGPImpliesRecutionOld} 
and the classification of the complexity of the CSP \cite{BulatovFVConjecture,BulatovProofCSP,MyProofCSP,ZhukFVConjecture}
give the following classification of the complexity for the PGP case with constants.

\begin{cor}\cite{AU-Chen-PGP,LICS2015}
Suppose $\Gamma$ contains all constants relations
and $\Pol(\Gamma)$ has the PGP property. 
If $\Gamma$ admits a WNU polymorphism, then 
$\QCSP(\Gamma)$ is solvable in polynomial time;
$\QCSP(\Gamma)$ is NP-complete otherwise.
\end{cor}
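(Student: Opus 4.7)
The plan is to reduce the corollary to a direct combination of two facts that have already been established: Theorem~\ref{PGPImpliesRecutionOld}, which gives a polynomial-time reduction of $\QCSP(\Gamma)$ to $\CSP(\Gamma)$ under the PGP assumption and the presence of all constant relations, and the CSP dichotomy theorem of \cite{BulatovFVConjecture,BulatovProofCSP,MyProofCSP,ZhukFVConjecture}, which states that $\CSP(\Gamma)$ is solvable in polynomial time if $\Gamma$ admits a WNU polymorphism and is $\NP$-complete otherwise. The whole argument is then essentially a two-line compose-and-conclude.

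\textbf{Tractability.} Under the hypotheses of the corollary, Theorem~\ref{PGPImpliesRecutionOld} supplies a polynomial-time many-one reduction from $\QCSP(\Gamma)$ to $\CSP(\Gamma)$. If $\Gamma$ admits a WNU polymorphism then, by the CSP dichotomy, $\CSP(\Gamma)$ is solvable in polynomial time. Composing the reduction with the CSP algorithm yields a polynomial-time procedure for $\QCSP(\Gamma)$.

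\textbf{Hardness and membership in $\NP$.} Suppose $\Gamma$ admits no WNU polymorphism. By the CSP dichotomy, $\CSP(\Gamma)$ is $\NP$-complete. Since every instance of $\CSP(\Gamma)$ is an instance of $\QCSP(\Gamma)$ in which all variables are bound by $\exists$, this trivial embedding shows $\QCSP(\Gamma)$ is $\NP$-hard. For membership in $\NP$, apply Theorem~\ref{PGPImpliesRecutionOld} once more: $\QCSP(\Gamma)$ reduces in polynomial time to $\CSP(\Gamma)$, which lies in $\NP$, hence $\QCSP(\Gamma) \in \NP$. Combining the two bounds yields $\NP$-completeness.

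The only potential obstacle is bookkeeping: one must verify that the reduction produced by Theorem~\ref{PGPImpliesRecutionOld} actually lands in $\CSP(\Gamma)$ (not in some extension) so that both the polynomial-time algorithm in the tractable case and the $\NP$ membership argument in the hard case are applicable, and that the reduction preserves the truth value of the sentence. Both points are precisely what Theorem~\ref{PGPImpliesRecutionOld} asserts, so no additional work beyond quoting it is required; the corollary is a one-step consequence of that theorem together with the CSP dichotomy.
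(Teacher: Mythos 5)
Your proof is correct and takes essentially the same route the paper intends: the paper states this corollary as an immediate consequence of Theorem~\ref{PGPImpliesRecutionOld} together with the CSP dichotomy theorem, and you have simply spelled out the composition (polynomial-time algorithm in the WNU case, NP-membership via the reduction and NP-hardness via the trivial embedding of $\CSP(\Gamma)$ into $\QCSP(\Gamma)$ otherwise). Nothing is missing.
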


For many years
there was a Conjecture suggested by Hubie Chen \cite{Meditations,MFCS2017} claiming that 
$\QCSP(\Gamma)$ is PSpace-hard whenever $\Pol(\Gamma)$ has the EGP property.
If this conjecture was true, then, by Theorem \ref{PGPEGPDichotomy}, it would complete the classification of the complexity at least for constraint languages with constants.
Recently, this conjecture was disproved \cite{zhuk2020qcsp} but the PGP case remains a very important case for the study of the complexity of the Quantified CSP.

\section{Main Results}

In this paper we present a simpler proof of Theorem \ref{PGPImpliesRecutionOld}.
Moreover, our reduction works not only for constraint languages with constants but 
for any constraint languages $\Gamma$ such that $\Pol(\Gamma)$ has the PGP property.
As a result we obtain a stronger claim and a complete classification of the complexity of the 
Quantified CSP for any constraint language $\Gamma$ whose polymorphisms have the PGP property.

\begin{thm}\label{ReductionToCSPTHM}
Suppose 
$\Pol(\Gamma)$ has the PGP property.
Then 
$\QCSP(\Gamma)$ 
is polynomially reducible to 
$\CSP(\Gamma^{*})$.
\end{thm}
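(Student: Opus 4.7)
The plan is to use the $k$-switchability of $(A;\Pol(\Gamma))$ provided by Theorem~\ref{PGPMeansSwitchable}. Fix $k$ so that $(A;\Pol(\Gamma))$ is $k$-switchable and let $\sigma_n^k\subseteq A^n$ denote the set of $n$-tuples having at most $k$ switches; note $|\sigma_n^k|=O(n^k)$. Given a QCSP instance $\forall x_1\exists y_1\cdots\forall x_n\exists y_n\,\Phi$, I would construct a $\CSP(\Gamma^{*})$ instance that encodes a partial Skolem strategy on $\sigma_n^k$, as follows. For every prefix $(a_1,\dots,a_i)$ that occurs as an $i$-prefix of some tuple in $\sigma_n^k$, introduce a fresh variable $Y_i^{a_1\dots a_i}$, shared across tuples with the same $i$-prefix (mimicking the dependency $y_i=f_i(x_1,\dots,x_i)$ of the Skolem view of the QCSP). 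For each $(a_1,\dots,a_n)\in\sigma_n^k$, add the constraints of $\Phi$ with each $x_j$ replaced by the constant $a_j$ via the unary relation $(x=a_j)\in\Gamma^{*}$ and each $y_j$ replaced by $Y_j^{a_1\dots a_j}$. The resulting instance has size polynomial in $n$.

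The forward direction of the equivalence is immediate: given Skolem functions $f_1,\dots,f_n$ witnessing the QCSP, set $Y_i^{a_1\dots a_i}:=f_i(a_1,\dots,a_i)$. For the converse, a solution of the $\CSP(\Gamma^{*})$ instance yields partial Skolem functions $f_i\colon \sigma_i^k\to A$, and the task is to extend each of them to a total function $A^i\to A$ such that, jointly, $(\vec c,f_1(c_1),\dots,f_n(c_1,\dots,c_n))$ lies in the solution set $R$ of $\Phi$ for every $\vec c\in A^n$. The extension should be driven by switchability: any $(c_1,\dots,c_i)\in A^i$ admits a representation $t(\vec b_1,\dots,\vec b_m)$ with $t\in\Pol(\Gamma)$ and $\vec b_j\in\sigma_i^k$, and one provisionally sets $f_i(c_1,\dots,c_i)=t(f_i(\vec b_1),\dots,f_i(\vec b_m))$. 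Since $R$ is preserved by $\Pol(\Gamma)$, applying $t$ coordinatewise to the tuples $(\vec b_j,f_1(\vec b_j|_{[1]}),\dots,f_n(\vec b_j))$ produced by the CSP solution (with the $\vec b_j$'s extended to $n$-tuples in $\sigma_n^k$) still lies in $R$, so each $\vec c\in A^n$ can be covered by a tuple in $R$.

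The main obstacle is the coherence of this extension across levels: the value $f_i(c_1,\dots,c_i)$ must be independent of the chosen decomposition, and the decompositions employed at level $n$ must restrict compatibly to decompositions at each level $i<n$, so that one actually obtains a bona fide Skolem strategy rather than a collection of local witnesses. I would address this by selecting, uniformly for every $\vec c\in A^n$, a single decomposition of $\vec c$ over $\sigma_n^k$ and reading off the values $f_i(\vec c|_{[i]})$ from the induced decompositions at every level, then verifying that this simultaneous application of $t$ to the CSP tuples yields a member of $R$ above each $\vec c$. A secondary difficulty, and precisely the point at which this strengthens Theorem~\ref{PGPImpliesRecutionOld}, is that $\Gamma$ itself need not contain constants; this is absorbed by keeping the constants exclusively on the CSP side (inside $\Gamma^{*}$) while all polymorphism reasoning takes place inside $\Pol(\Gamma)$, so that the switchability-based extension argument is insensitive to their absence from $\Gamma$.
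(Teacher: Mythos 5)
Your construction is a reasonable one, and the forward direction is fine, but the converse has a genuine gap that you yourself flag and then do not close. The CSP solution gives you partial Skolem values $Y_i^{\vec b}$ only for prefixes $\vec b$ with at most $k$ switches, and you propose to extend them by decomposing an arbitrary $\vec c\in A^n$ as $t(\vec b_1,\dots,\vec b_m)$ with $\vec b_j\in\sigma_n^k$ and pushing $t$ through. The problem is that your proposed remedy — ``select, uniformly for every $\vec c\in A^n$, a single decomposition of $\vec c$ and read off $f_i(\vec c|_{[i]})$ from the induced decompositions'' — makes the value $f_i(\vec c|_{[i]})$ a function of the \emph{whole} tuple $\vec c$, not of its $i$-prefix. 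Two tuples $\vec c,\vec c'$ with the same $i$-prefix will in general receive different decompositions, hence different values for $f_i(\vec c|_{[i]})$; even worse, when $\vec c|_{[i]}$ already has at most $k$ switches there is no reason the term-derived value $t(Y_i^{\vec b_1|_{[i]}},\dots,Y_i^{\vec b_m|_{[i]}})$ should agree with the CSP value $Y_i^{\vec c|_{[i]}}$. So what you actually establish is only $\forall\vec x\,\exists\vec y\,\Phi$ — the $\Pi_2$ relaxation — and not the interleaved sentence $\forall x_1\exists y_1\cdots\forall x_n\exists y_n\,\Phi$, for which the $y_i$'s must depend only on $x_1,\dots,x_i$.

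The paper sidesteps exactly this obstacle. Rather than extending a partial Skolem function, it passes to the exponentially large $\Pi_2$ sentence $\zeta(\Psi)$, whose existential variables $y_i^{a_1,\dots,a_{i-1}}$ are \emph{indexed by prefixes} — so Skolem coherence is built into the sharing of existential variables, and $\Psi$ is equivalent (Lemmas~\ref{PsiImpliesZetaPsi}, \ref{ZetaPsiImpliesPsi}) to the $x$-relation $R$ of $\zeta(\Psi)$ being full. Switchability is then invoked for $R$ (a pp-definable relation, hence preserved by $\Pol(\Gamma)$), not for $A^n$ itself, and the membership of each bounded-switch tuple $\alpha$ in $R$ is certified by the truth of a small QCSP instance $\omega_{n_1,\dots,n_k}(\Psi)$ with at most $2r+1$ universal quantifiers (Theorem~\ref{ReductionTHM}), each of which is then converted to a $\CSP(\Gamma^{*})$ instance by bounded universal-quantifier elimination (Lemma~\ref{CSPReductionLemma}). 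In short, the paper proves fullness of a preserved relation instead of constructing a Skolem strategy by hand, and the strategy-coherence you struggle with never arises. If you want to repair your argument, you would need to show that a CSP solution can be propagated to a genuine assignment of $y_i^{\vec a}$ for \emph{all} $\vec a\in A^{i}$ satisfying all of $\zeta(\Psi)$'s conjuncts simultaneously — which is precisely what the detour through $R$, $\omega$, and switchability of $R$ accomplishes.
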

By $\QCSP^{\Pi_2}_{n(\forall)\le k}(\Gamma)$ 
we denote the modification of $\QCSP(\Gamma)$ where 
we allow only $\Pi_2$-sentences with at most $k$ universal quantifiers, that is, 
$\forall x_1 \dots \forall x_n \exists y_1 \dots \exists y_m \;\Phi$ for $n\le k$.

\begin{thm}\label{ReductionToQCSPTHM}
Suppose 
$\Pol(\Gamma)$ has the PGP property.
Then 
$\QCSP(\Gamma)$ 
is polynomially equivalent to 
$\QCSP^{\Pi_2}_{n(\forall)\le |A|}(\Gamma)$.
\end{thm}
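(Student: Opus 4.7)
The easy direction, $\QCSP^{\Pi_{2}}_{n(\forall)\le|A|}(\Gamma)\le_p \QCSP(\Gamma)$, holds by inclusion: every instance of the former is literally an instance of the latter. For the converse reduction, my plan is a two-step composition. First, reduce an arbitrary $\QCSP(\Gamma)$ instance to polynomially many $\QCSP^{\Pi_{2}}(\Gamma)$ instances, without a bound on the number of universal quantifiers; this is the $\Pi_{2}$-analog of Theorem~\ref{ReductionToCSPTHM} and I expect it to be isolated as a separate lemma. Second, reduce each $\QCSP^{\Pi_{2}}(\Gamma)$ instance to polynomially many instances of $\QCSP^{\Pi_{2}}_{n(\forall)\le|A|}(\Gamma)$ by capping the number of universal quantifiers.

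For the second step, consider a $\Pi_{2}$ instance $\forall x_{1}\ldots\forall x_{n}\exists\bar{y}\;\Phi$ with $n>|A|$. Invoke Theorem~\ref{PGPMeansSwitchable}: $\Pol(\Gamma)$ is $k$-switchable for some constant $k$ depending only on $\Gamma$, so $A^{n}$ is generated from the polynomially many tuples of $A^{n}$ with at most $k$ switches. Each such tuple is determined by at most $k$ switch positions in $\{1,\ldots,n-1\}$ (polynomially many choices) together with at most $k+1$ block values from $A$. The block values can be realized as $u_{\phi(1)},\ldots,u_{\phi(k+1)}$ for some map $\phi:\{1,\ldots,k+1\}\to\{1,\ldots,|A|\}$ (constantly many choices, since $k$ is a constant). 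For each pair (switch pattern, $\phi$), output the $\Pi_{2}$ instance
\[
\forall u_{1}\ldots\forall u_{|A|}\;\exists\bar{y}\;\Phi\bigl[x_{j}\mapsto u_{\phi(b(j))}\bigr],
\]
where $b(j)$ is the index of the block containing position $j$. The conjunction of the truth values of these instances is equivalent to the original $\Pi_{2}$ instance: the $\Rightarrow$ direction is immediate, and $\Leftarrow$ uses the closure of the relation $R(\bar{x}):=\exists\bar{y}\,\Phi(\bar{x},\bar{y})$ under $\Pol(\Gamma)$ together with $k$-switchability, which together guarantee that if $R$ contains every switchable tuple then $R=A^{n}$.

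The main obstacle is the first step: reducing $\QCSP(\Gamma)$ with an interleaved quantifier prefix to $\QCSP^{\Pi_{2}}(\Gamma)$. Naively shifting quantifiers only preserves implication in one direction, so one must leverage switchability and the polymorphism-combination argument from Theorem~\ref{ReductionToCSPTHM}'s proof to show that, under the PGP hypothesis, the $\Pi_{2}$ relaxation is in fact equivalent to the original QCSP, possibly after producing polynomially many auxiliary $\QCSP^{\Pi_{2}}$ instances in parallel. This is where the bulk of the technical work would lie, and where the improvement over prior reductions~\cite{AU-Chen-PGP,LICS2015} must be established, since earlier arguments passed through $\CSP(\Gamma^{*})$ and thus relied on having constants available.
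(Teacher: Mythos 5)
The proposal contains a genuine gap, and you correctly flag it yourself. Your step~2 (capping the number of universals in a $\Pi_2$ instance by enumerating switch patterns together with maps $\phi:\{1,\dots,k+1\}\to\{1,\dots,|A|\}$) is sound: it combines the paper's Lemma~\ref{ReduceNumberOfQuantifiersLemma}-style identification trick with switchability, and your $\Leftarrow$ argument (if $R(\bar{x})=\exists\bar{y}\,\Phi$ contains every low-switch tuple then, being invariant under $\Pol(\Gamma)$, it is full) is exactly the right invariance argument. But step~1 — reducing a general, interleaved QCSP instance to polynomially many $\QCSP^{\Pi_2}(\Gamma)$ instances with an \emph{unbounded} number of universals — is not a side note; it is where the whole content of the theorem lives, and you do not give an argument for it.

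Moreover, the paper's decomposition is essentially the reverse of yours, and the ordering matters. The paper first bounds the number of universal quantifiers to a constant $2r+1$ \emph{while keeping the interleaved prefix}, via the $\omega_{n_1,\dots,n_k}$-transformation of Theorem~\ref{ReductionTHM}: identify the blocks of $x$-variables between chosen switch positions, pull them out as new outer universals $z_0,\dots,z_k$, and rely on switchability to show the conjunction of these polynomially many sentences is equivalent to $\Psi$. Only then does it apply Lemma~\ref{PiTwoReductionLemma} (moving a universal left costs a factor $|A|$ in size; this is polynomial precisely because the number of universals is now a constant), followed by Lemma~\ref{ReduceNumberOfQuantifiersLemma}. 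If you instead try to go to $\Pi_2$ form before bounding the universals, the naive quantifier-reordering blows up by $|A|^n$, and there is no polynomial route; you would essentially have to rediscover the $\omega$-transformation inside your step~1 anyway, at which point your step~2 becomes redundant because the standard Lemmas~\ref{PiTwoReductionLemma} and \ref{ReduceNumberOfQuantifiersLemma} already finish the job. So the missing idea is Theorem~\ref{ReductionTHM}: bounding the universal count on the interleaved sentence before any conversion to $\Pi_2$.
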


Let $M$ be the matrix whose rows are all the tuples of $A^{k}$ listed in the lexicographic order.
Thus, $M$ is a matrix with $k$ columns and $|A|^k$ rows.
Let $\gamma_{1}^{k},\dots,\gamma_{k}^{k}\in A^{{|A|}^{k}}$ be the $k$ columns of $M$. 
For a relation $R$ of arity $s$ on a domain $A$ by $R^k$ we denote 
the relation on the domain $A^{k}$ defined by 
$$((a_1^1,\dots,a_1^k),(a_2^1,\dots,a_2^k),\dots,
(a_s^1,\dots,a_s^k))\in R^{k}
\Leftrightarrow
\forall i\colon (a_1^i,\dots,a_s^i)\in R.$$
Put $\Gamma^{k} = \{R^{k}\mid R\in \Gamma\}$.
Then we have the following reduction to the CSP over the domain $A^{{|A|}^{|A|}}$.

\begin{thm}\label{ReductionToCSPNoConstantsTHM}
Suppose $\Pol(\Gamma)$ has the PGP property. 
Then $\QCSP(\Gamma)$ is polynomially equivalent to 
$\CSP(\Gamma^{|A|^{|A|}}\cup\{\gamma_{1}^{|A|},\dots,\gamma_{|A|}^{|A|}\})$.
\end{thm}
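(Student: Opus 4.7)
The plan is to combine Theorem~\ref{ReductionToQCSPTHM} with a standard product-style encoding of $\Pi_2$-sentences, in which the $|A|^{|A|}$ rows of the matrix $M$ enumerate all possible joint evaluations of the $|A|$ universal quantifiers and the columns $\gamma_1^{|A|},\dots,\gamma_{|A|}^{|A|}$ record those evaluations.

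For the direction $\QCSP(\Gamma)\to\CSP(\Gamma^{|A|^{|A|}}\cup\{\gamma_1^{|A|},\dots,\gamma_{|A|}^{|A|}\})$, I first use Theorem~\ref{ReductionToQCSPTHM} to reduce any QCSP instance to a polynomial-size conjunction of $\Pi_2$-sentences with at most $|A|$ universal quantifiers, padded with dummy universal variables so that each has exactly $k=|A|$ universal quantifiers:
\[
\forall x_1 \dots \forall x_k \; \exists y_1 \dots \exists y_m \; \Phi.
\]
I then encode each such sentence as a single CSP instance: I introduce a variable $\tilde y_j$ in $A^{|A|^{k}}$ for each existential $y_j$, and a variable $\tilde x_i$ forced by the unary relation $\{\gamma_i^{|A|}\}$ to equal the column $\gamma_i^{|A|}$. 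Every atomic constraint $R(v_1,\dots,v_s)$ of $\Phi$ is translated to $R^{|A|^{|A|}}(\tilde v_1,\dots,\tilde v_s)$. Because the $r$-th coordinate of $\gamma_i^{|A|}$ is exactly the $i$-th entry of the $r$-th tuple of $A^{k}$ in lex order, a solution to the CSP is precisely a Skolem realization of the existentials that works for every evaluation of the universals. Conjoining the CSP translations of the individual $\Pi_2$-sentences on disjoint variable copies yields a single CSP instance of polynomial size.

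For the reverse direction, given a CSP instance with variables $z_1,\dots,z_N \in A^{|A|^{|A|}}$ and constraints drawn from $\Gamma^{|A|^{|A|}}\cup\{\gamma_1^{|A|},\dots,\gamma_{|A|}^{|A|}\}$, I construct the $\Pi_2$-sentence
\[
\forall x_1 \dots \forall x_{|A|} \; \exists \tilde z_1 \dots \exists \tilde z_N \; \Psi,
\]
where $\tilde z_j$ is intended to denote the coordinate of $z_j$ indexed by the universal assignment $(x_1,\dots,x_{|A|})\in A^{|A|}$ (which bijectively indexes the rows of $M$). Each constraint $R^{|A|^{|A|}}(z_{i_1},\dots,z_{i_s})$ becomes $R(\tilde z_{i_1},\dots,\tilde z_{i_s})$, and each unary constraint $z_j \in \{\gamma_i^{|A|}\}$ is handled syntactically by substituting $x_i$ for $\tilde z_j$ throughout $\Psi$. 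A winning Skolem strategy for the QCSP gives, coordinate by coordinate, satisfying values for the $z_j$'s, and conversely.

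The main technical nuisance is the indexing bookkeeping: one must verify that enumerating the rows of $M$ in lex order provides a canonical bijection $A^{|A|}\to\{1,\dots,|A|^{|A|}\}$ under which the column $\gamma_i^{|A|}$ encodes the projection $(x_1,\dots,x_{|A|})\mapsto x_i$. Once this is set up, correctness of both translations is immediate from the definition of $R^{|A|^{|A|}}$, and both reductions are polynomial since the number of variables and constraints grows only by a factor depending on the constant $|A|$.
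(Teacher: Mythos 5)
Your proposal follows essentially the same route as the paper: apply Theorem~\ref{ReductionToQCSPTHM} to pass to a $\Pi_2$-sentence with at most $|A|$ universal quantifiers, then use the matrix columns $\gamma_1^{|A|},\dots,\gamma_{|A|}^{|A|}$ to pin the universal variables to their ``generic'' values and reinterpret each constraint over the product domain $A^{|A|^{|A|}}$ via $R\mapsto R^{|A|^{|A|}}$; the reverse direction inverts this coordinatewise. Your bookkeeping about the lex order and the bijection $A^{|A|}\to\{1,\dots,|A|^{|A|}\}$ is exactly what the paper implicitly relies on.

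There is, however, one corner case in the reverse direction that your translation does not handle and that must be treated explicitly for correctness. An arbitrary instance of $\CSP(\Gamma^{|A|^{|A|}}\cup\{\gamma_1^{|A|},\dots,\gamma_{|A|}^{|A|}\})$ may contain two unary constraints $\gamma_i(z)$ and $\gamma_j(z)$ on the \emph{same} variable $z$ with $i\neq j$. Such an instance is trivially unsatisfiable (the columns $\gamma_i^{|A|}$ are pairwise distinct), but your syntactic rule ``substitute $x_i$ for $\tilde z$'' is then ill-defined: substituting one of $x_i,x_j$ and dropping the other $\gamma$-constraint yields a QCSP instance that can be true even though the source CSP instance is false. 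The paper deals with this by first detecting any variable carrying two distinct $\gamma$-constraints and, in that case, outputting a fixed no-instance of $\QCSP(\Gamma)$; you should add the same preprocessing step. Aside from this omission the argument is sound, and your remark about padding to exactly $|A|$ universal quantifiers and conjoining over disjoint copies is a harmless restatement of what Theorem~\ref{ReductionToQCSPTHM} already packages into a single $\Pi_2$-instance.
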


This implies the following complexity classification.

\begin{thm}\label{PGPClassificationTHM}
Suppose $\Pol(\Gamma)$ has the PGP property. 
Then $\QCSP(\Gamma)$ is solvable in polynomial time if 
$\Gamma^{|A|^{|A|}}\cup\{\gamma_{1}^{|A|},\dots,\gamma_{|A|}^{|A|}\}$ admits 
a WNU polymorphism; 
$\QCSP(\Gamma)$ is NP-complete otherwise.
\end{thm}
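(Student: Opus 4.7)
I would argue that the theorem is a direct corollary of Theorem~\ref{ReductionToCSPNoConstantsTHM} combined with the CSP dichotomy theorem \cite{BulatovFVConjecture,BulatovProofCSP,MyProofCSP,ZhukFVConjecture}. Abbreviate $\Gamma^{\bullet} := \Gamma^{|A|^{|A|}}\cup\{\gamma_{1}^{|A|},\dots,\gamma_{|A|}^{|A|}\}$. The strategy is to transfer the known complexity classification for CSPs back to $\QCSP(\Gamma)$ via the polynomial equivalence supplied by Theorem~\ref{ReductionToCSPNoConstantsTHM}, and then read off the two cases.

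First I would invoke Theorem~\ref{ReductionToCSPNoConstantsTHM}, which under the PGP hypothesis asserts that $\QCSP(\Gamma)$ and $\CSP(\Gamma^{\bullet})$ are polynomially equivalent, so each reduces to the other. Since $\Gamma^{\bullet}$ is a finite constraint language over the finite (if large) domain $A^{|A|^{|A|}}$, the CSP dichotomy applies and yields two cases. If $\Gamma^{\bullet}$ admits a WNU polymorphism, then $\CSP(\Gamma^{\bullet})$ is in P, and the reduction $\QCSP(\Gamma) \le_{P} \CSP(\Gamma^{\bullet})$ places $\QCSP(\Gamma)$ in P as well. Otherwise $\CSP(\Gamma^{\bullet})$ is NP-complete: the reduction $\CSP(\Gamma^{\bullet}) \le_{P} \QCSP(\Gamma)$ yields NP-hardness of $\QCSP(\Gamma)$, while the reverse reduction $\QCSP(\Gamma) \le_{P} \CSP(\Gamma^{\bullet})$ places $\QCSP(\Gamma)$ in NP, so $\QCSP(\Gamma)$ is NP-complete.

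The main obstacle for Theorem~\ref{PGPClassificationTHM} has in fact already been absorbed into the preceding theorems: all of the technical work — in particular the non-trivial fact that $\QCSP(\Gamma)$ lands in NP at all, which is by no means automatic since unrestricted QCSP can be PSpace-complete — is packaged into Theorem~\ref{ReductionToCSPNoConstantsTHM} (itself depending on Theorem~\ref{ReductionToCSPTHM}, Theorem~\ref{ReductionToQCSPTHM}, and ultimately on the switchability characterization of Theorem~\ref{PGPMeansSwitchable}). Once these tools plus the CSP dichotomy are in hand, the classification stated in terms of WNU-admissibility over the blown-up domain $A^{|A|^{|A|}}$ requires no further argument beyond bookkeeping of the two reductions.
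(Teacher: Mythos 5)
Your proposal is correct and follows exactly the route the paper intends: the paper states the theorem merely as "a corollary" of Theorem~\ref{ReductionToCSPNoConstantsTHM} combined with the CSP dichotomy, and you have simply spelled out the two-case bookkeeping (P-membership via the forward reduction; NP-hardness via the backward reduction and NP-membership via the forward one) that the paper leaves implicit.
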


The paper is organized as follows.
In Section 3 we formulate the main reduction 
and prove the main results of the paper using 
this reduction.
In Section 4 we prove four transformations 
of formulas 
that allow 
to remove universal quantifiers, 
reduce the number of universal quantifiers, 
and transform it into a $\Pi_{2}$-form.
In Section 5 we prove the main reduction of the paper.

\section{Main Reduction}

We will need the following transformation of a 
quantified sentence.
Suppose $0=n_{0}< n_1<n_2<\dots<n_k<n_{k+1} = n+1$ 
and 
$\Psi = \exists y_1\forall x_1\exists y_2\forall x_2 \dots
\exists y_n\forall x_{n} \Phi$.
Define a 
mapping 
$\omega_{n_1,\dots,n_{k}}\colon 
\{x_1,\dots,x_n,y_1,\dots,y_{n}\}\to 
\{x_1,\dots,x_n,y_1,\dots,y_{n},z_0,z_1,\dots,z_{k}\}$ by 
$$\omega_{n_1,\dots,n_{k}}(u) = 
\begin{cases}
y_{i},& \text{if $u = y_i$ for some $i$}\\
x_{n_{j}},& \text{if $u = x_{n_{j}}$ for some $j$}\\
z_{j},& \text{if $u = x_{i}$ and $n_{j}<i<n_{j+1}$}
\end{cases}$$

By $\omega_{n_1,\dots,n_{k}} (\Psi)$ we denote the sentence such that
\begin{enumerate}
    \item every variable $u$ is replaced by the variable 
    $\omega_{n_1,\dots,n_{k}}(u)$;
    \item the quantification $\forall x_{i}$ 
    is removed whenever 
    $\omega_{n_1,\dots,n_{k}}(x_{i})\neq x_{i}$;
    \item the outer quantification 
    $\forall z_{0}\forall z_1\dots\forall z_{k}$ 
    is added.
\end{enumerate}

The key result of this paper is as follows.

\begin{thm}\label{ReductionTHM}
Suppose 
$\Pol(\Gamma)$ is $r$-switchable, 
$\Psi= \exists y_1\forall x_1\exists y_2\forall x_2 \dots
\exists y_n\forall x_{n} \Phi$, where $\Phi$ is a conjunctive formula over $\Gamma$. 
Then 
$\Psi$ holds
if and only if 
for every $1\le n_1<n_2<\dots<n_{k}\le n$, 
$k\le r$,
the sentence
$\omega_{n_1,\dots,n_{k}}(\Psi)$ holds.
\end{thm}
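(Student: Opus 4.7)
The forward direction is routine. Given Skolem functions $f_1,\dots,f_n$ witnessing $\Psi$, I build Skolem functions for $\omega_{n_1,\dots,n_k}(\Psi)$ by simulation: any play of the universal player in $\omega_{n_1,\dots,n_k}(\Psi)$, that is, values for the outer $z_0,\dots,z_k$ and for the interleaved $x_{n_1},\dots,x_{n_k}$, uniquely determines a $\Psi$-universal assignment $t$ via $t_{n_j}=x_{n_j}$ and $t_i = z_j$ for $n_j<i<n_{j+1}$. Declaring the $\omega$-Skolem value for $y_i$ to be $f_i(t_1,\dots,t_{i-1})$ is legal, since $(t_1,\dots,t_{i-1})$ is computable from $z_0,\dots,z_{j^*}$ and $x_{n_1},\dots,x_{n_{j^*}}$, where $j^*$ is the largest index with $n_{j^*}<i$, and all of these have already been played. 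The conjunctive matrix $\Phi$ holds by the Skolem property of the $f_i$.

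For the backward direction, assume every $\omega_{n_1,\dots,n_k}(\Psi)$ with $k\le r$ holds, and fix for each such $\bar n$ a Skolem strategy $\sigma^{\bar n}$. The plan is to combine the $\sigma^{\bar n}$ via the polymorphisms made available by $r$-switchability. For a target universal assignment $t\in A^n$, Theorem~\ref{PGPMeansSwitchable} yields a term operation $F$ of $\Pol(\Gamma)$ and tuples $s^{(1)},\dots,s^{(m)}$, each with at most $r$ switches, such that $t = F(s^{(1)},\dots,s^{(m)})$ coordinatewise. Querying $\sigma^{\bar n^{(j)}}$ at the unique universal assignment in $\omega_{\bar n^{(j)}}(\Psi)$ that realizes $s^{(j)}$ produces existential witnesses $y_i^{(j)}$ satisfying $\Phi(s^{(j)},y^{(j)})$. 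Setting $y_i(t) := F\bigl(y_i^{(1)},\dots,y_i^{(m)}\bigr)$ then satisfies $\Phi(t,y(t))$ because $F$ is a polymorphism of every relation in $\Gamma$ and hence of every conjunct of $\Phi$.

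The main obstacle is ensuring that the collection $\{y_i(t)\}$ is an honest Skolem strategy, i.e., that $y_i(t)$ depends only on the prefix $(t_1,\dots,t_{i-1})$. A direct composition does not suffice: each $\sigma^{\bar n^{(j)}}$ is allowed to depend on all outer $z$-variables of its instance, some of which encode coordinates of $s^{(j)}$ strictly later than $i$. I expect to resolve this via the four formula transformations promised in Section~4, which are designed to remove individual interleaved universal quantifiers or to promote them into the outer universal prefix while preserving the truth value of the sentence. The plan is then inductive: starting from $\Psi$, apply one such transformation at a time, at each step certifying truth-preservation by the polymorphism/switchability composition sketched above, until the number of interleaved universal quantifiers has dropped to at most $r$; the resulting sentence is, by construction, one of the hypothesized $\omega_{n_1,\dots,n_k}(\Psi)$. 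The bulk of the work — and the real combinatorial-algebraic content — lies in proving that each individual transformation is sound, which is precisely where $r$-switchability and the closure of $\Gamma$ under $\Pol(\Gamma)$ are invoked.
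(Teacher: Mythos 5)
Your forward direction is fine and matches the paper's (though the paper just observes it follows from identifying universals and moving quantifiers outward).

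For the backward direction you correctly isolate the crux: composing the strategies $\sigma^{\bar n^{(j)}}$ under a polymorphism $F$ produces existential values $y_i(t)=F(y_i^{(1)},\dots,y_i^{(m)})$ that satisfy $\Phi(t,y(t))$, but this family need not be a Skolem strategy because $y_i(t)$ depends on the full $t$ (through the switch decomposition and through later outer $z$'s in each $\sigma^{\bar n^{(j)}}$), not just on the prefix $(t_1,\dots,t_{i-1})$. Two assignments agreeing on a prefix can have incompatible decompositions into $\le r$-switch tuples, so the composed witnesses need not cohere. That is a genuine obstruction, and your proposed fix — iterating the Section~4 transformations until $\Psi$ becomes some $\omega_{n_1,\dots,n_k}(\Psi)$ — does not address it: transformations (1)–(3) are used elsewhere in the paper to post-process the conclusion of this theorem (to get to CSP with constants, to $\Pi_2$, to few universals), not to derive it, and none of them is of the form ``identify a universal variable with its neighbour,'' which is exactly the step whose soundness the theorem is supposed to certify. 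You would be assuming the conclusion.

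The idea you are missing is the fourth transformation $\zeta$, and more importantly \emph{how} it is used. Instead of trying to build a Skolem strategy for $\Psi$, the paper passes to the equivalent $\Pi_2$-sentence $\zeta(\Psi)$, whose variables $y_i^{a_1,\dots,a_{i-1}}$ and $x_i^{a_1,\dots,a_i}$ have the prefix dependency baked into their names. After dropping the (now all-outer) universal quantifiers, one gets a \emph{relation} $R$ on the $x$-coordinates that is $\Pol(\Gamma)$-invariant, and $\Psi$ holds iff $R$ is full. Fullness of a relation invariant under an $r$-switchable clone reduces (by definition) to membership of every tuple with $\le r$ switches, and each such tuple $\alpha$ is certified to lie in $R$ by plugging the Skolem strategy for the appropriate $\omega_{n_1,\dots,n_k}(\Psi)$ into the $y$-variables. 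Crucially, the witnesses for \emph{different} $\alpha$'s never have to cohere into one strategy: applying $F$ to tuples of $R$ only needs $R$ to be $\Pol(\Gamma)$-invariant, which it is. That single change of viewpoint — from ``combine strategies into a strategy'' to ``generate a relation from a few tuples'' — is what makes the dependency problem disappear, and it is absent from your sketch. As written, your backward direction has a gap that the announced inductive scheme does not close.
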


Thus, we have a reduction of the Quantified CSP to polynomially many instances with at most $2r+1$ universal quantifiers.
To derive the main results we will need the following technical lemmas proved in Section \ref{TransformationsSection}.

\begin{lem}\label{CSPReductionLemma}
$\QCSP_{n(\forall)\le k}(\Gamma)$ 
is polynomially reducible to 
$\CSP(\Gamma^{*})$.
\end{lem}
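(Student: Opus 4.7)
The plan is to Skolemize the existential variables and then unroll the at-most-$k$ universal quantifiers into a single CSP instance over $\Gamma^{*}$ of polynomial size.

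Let $\Psi = Q_1 v_1 \cdots Q_n v_n\, \Phi$ be an instance of $\QCSP_{n(\forall)\le k}(\Gamma)$ and let $V_\forall = \{v_i : Q_i = \forall\}$, so $|V_\forall| \le k$. For every existential variable $v_i$, let $U_i = \{v_j : j < i,\ Q_j = \forall\} \subseteq V_\forall$. By the game-theoretic semantics of the $\QCSP$, $\Psi$ holds if and only if one can choose Skolem functions $f_i \colon A^{U_i} \to A$ for each existential $v_i$ such that, for every $\beta \colon V_\forall \to A$, $\Phi$ is satisfied when each existential $v_i$ is assigned $f_i(\beta|_{U_i})$ and each universal $v_j$ is assigned $\beta(v_j)$.

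I would encode this as a CSP by introducing a fresh variable $v_i^\alpha$ for every existential $v_i$ and every $\alpha \colon U_i \to A$ (meant to carry the value $f_i(\alpha)$), together with one pinned variable $c_a$ per $a \in A$ bound by the unary constant relation $(c_a = a) \in \Gamma^{*}$. Since $|U_i| \le k$ and both $k$ and $|A|$ are constant, the number of new variables is at most $n|A|^k + |A|$, which is polynomial in $n$. The constraints of the output instance are, for every $\beta \colon V_\forall \to A$ and every atom $R(v_{j_1},\dots,v_{j_m})$ of $\Phi$, the constraint $R(w_1,\dots,w_m)$, where $w_\ell = v_{j_\ell}^{\beta|_{U_{j_\ell}}}$ when $v_{j_\ell}$ is existential and $w_\ell = c_{\beta(v_{j_\ell})}$ when $v_{j_\ell}$ is universal. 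This produces at most $|A|^k\cdot |\Phi|$ constraints, again polynomial in the input size.

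Correctness is immediate from the Skolem characterization: a satisfying assignment of the constructed CSP instance is exactly a family of Skolem functions witnessing $\Psi$, because sharing the variable $v_i^\alpha$ across different $\beta$ with $\beta|_{U_i} = \alpha$ enforces that the chosen witness for $v_i$ depends only on the universal variables in $U_i$. The only nontrivial point, and really the main obstacle in an otherwise bookkeeping-heavy argument, is making sure the construction handles quantifier prefixes in which existential blocks are interleaved with the at most $k$ universal quantifiers (as happens in the instances produced by Theorem~\ref{ReductionTHM}); indexing the Skolem variables by $U_i$ rather than by all of $V_\forall$ is precisely what takes care of this. The polynomial bound on the output size relies critically on $k$ and $|A|$ being constants, exactly as in the hypothesis of the lemma.
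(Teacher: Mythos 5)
Your proposal is correct and is essentially the same reduction as the paper's. The paper removes universal quantifiers one at a time, from innermost to outermost: each removal duplicates the remaining formula $|A|$ times and replaces the eliminated universal by fresh existentials pinned to the $|A|$ constants via $\Gamma^{*}$. Iterating $k$ times and renaming bound variables in prenex form yields exactly your construction: an existential $v_i$ that follows the universals in $U_i$ ends up with $|A|^{|U_i|}$ copies indexed by $\alpha\colon U_i\to A$, and the pinned $x^i = a_i$ variables play the role of your $c_a$. You present this in one shot via explicit Skolemization rather than iteratively, which makes the dependency of each Skolem variable on only the preceding universals visible up front, but the resulting CSP instance and the $|A|^k\cdot|\Phi|$ size bound are the same.
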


\begin{lem}\label{PiTwoReductionLemma}
$\QCSP_{n(\forall)\le k}(\Gamma)$ 
is polynomially reducible to 
$\QCSP^{\Pi_{2}}_{n(\forall)\le|A|^{k}}(\Gamma)$.
\end{lem}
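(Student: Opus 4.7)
The plan is to Skolemize the bounded-universal sentence by introducing one existential copy for each possible history of universal choices, and to let the $\Pi_2$ sentence have exactly $|A|^k$ universal variables arranged so that one canonical adversary assignment simultaneously realizes every element of $A^k$. Let the input be $\Psi=\exists y_0\forall x_1\exists\vec y_1\forall x_2\exists\vec y_2\cdots\forall x_k\exists\vec y_k\,\Phi$, padding with dummy universal variables if necessary so that there are exactly $k$ of them. Each existential variable $y$ has a level $\ell(y)\in\{0,\dots,k\}$ equal to the number of $\forall$'s preceding it, and the truth of $\Psi$ is equivalent to the existence of Skolem functions $f_y\colon A^{\ell(y)}\to A$ such that $\Phi$ holds on every $(x_1,\dots,x_k)\in A^k$. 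To build the $\Pi_2$-sentence $\Psi'$, fix a canonical element $c\in A$, introduce $|A|^k$ universal variables $u^{\vec b}$ indexed by $\vec b\in A^k$, and for each original existential $y$ introduce copies $y^{(a_1,\dots,a_{\ell(y)})}$ indexed by $A^{\ell(y)}$; then for every $\vec a=(a_1,\dots,a_k)\in A^k$ add a copy of $\Phi$ in which $x_i$ is replaced by $u^{(c,\dots,c,a_1,\dots,a_i)}$ (left-padded with $c$'s to total length $k$) and each $y$ is replaced by $y^{(a_1,\dots,a_{\ell(y)})}$. Since $|A|$ and $k$ are constants, the result is a polynomial-size $\Pi_2$-sentence with at most $|A|^k$ universal variables.

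For the forward direction I would fix Skolem functions $f_y$ for $\Psi$ and, given any adversary assignment of the $u^{\vec b}$, define $y^{(a_1,\dots,a_{\ell(y)})}:=f_y(u^{(c,\dots,c,a_1)},\dots,u^{(c,\dots,c,a_1,\dots,a_{\ell(y)})})$; the $\vec a$-th copy of $\Phi$ then coincides with the Skolem identity evaluated at $x_i:=u^{(c,\dots,c,a_1,\dots,a_i)}$ and so holds. For the backward direction I would instantiate the adversary by the last-coordinate assignment $u^{\vec b}:=b_k$; then $u^{(c,\dots,c,a_1,\dots,a_i)}=a_i$, the $\vec a$-th copy reduces to $\Phi$ with $x_i=a_i$ and each $y$ set to $y^{(a_1,\dots,a_{\ell(y)})}$, and reading the satisfying existential witnesses off as $f_y(a_1,\dots,a_{\ell(y)}):=y^{(a_1,\dots,a_{\ell(y)})}$ yields Skolem functions for $\Psi$.

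The key obstacle is exhibiting an indexing of the $|A|^k$ universal variables that simultaneously (a) forces the variable playing the role of $x_i$ in the $\vec a$-th copy to depend only on $(a_1,\dots,a_i)$, so that the shared existential copies $y^{(a_1,\dots,a_{\ell(y)})}$ can consistently encode Skolem values across all $\vec a$, and (b) admits a single canonical adversary choice under which the tuples $(x_1,\dots,x_k)$ across the $|A|^k$ copies collectively enumerate $A^k$. The left-padding convention $u^{(c,\dots,c,a_1,\dots,a_i)}$ resolves both requirements at once: the universal used for $x_i$ depends only on $(a_1,\dots,a_i)$ by construction, while under the last-coordinate adversary $u^{\vec b}=b_k$ the slot for $x_i$ in the $\vec a$-th copy evaluates to precisely $a_i$, giving the needed simultaneous enumeration with only $|A|^k$ universal variables.
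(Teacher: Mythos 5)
Your proof is correct. It is, however, framed quite differently from the paper's. The paper's argument is iterative: it repeatedly applies the elementary equivalence $\exists \vec y\,\forall x\,\Phi \;\equiv\; \forall x^{1}\cdots\forall x^{s}\,\exists\vec y\,\bigwedge_{i}\Phi_{i}$ (where $\Phi_{i}$ renames $x$ to $x^{i}$), starting from the rightmost universal and pushing it leftward one existential block at a time, renaming bound variables along the way and finally putting the result in prenex form. Your approach instead gives a one-shot explicit construction of the target $\Pi_{2}$-sentence via Skolemization: you introduce $|A|^{k}$ universal variables $u^{\vec b}$, one conjunct $\Phi_{\vec a}$ per $\vec a\in A^{k}$, and the left-padding convention $x_{i}\mapsto u^{(c,\dots,c,a_{1},\dots,a_{i})}$ together with the last-coordinate adversary $u^{\vec b}:=b_{k}$. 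This device cleanly encodes both the prefix-dependency of Skolem functions (so that shared existential copies $y^{(a_{1},\dots,a_{\ell(y)})}$ are consistent across conjuncts) and the simultaneous enumeration of $A^{k}$ under a single adversary move. Your version buys a fully explicit, self-contained description of the reduction (including the exact form of the $\Pi_{2}$-sentence and the witnessing assignments in both directions), whereas the paper's version is shorter but leaves the bookkeeping of the iterated swaps and prenexing implicit; both yield the stated $|A|^{k}$ bound on the number of universals.
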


\begin{lem}\label{ReduceNumberOfQuantifiersLemma}
$\QCSP^{\Pi_{2}}_{n(\forall)\le k}(\Gamma)$
is polynomially reducible to 
$\QCSP^{\Pi_{2}}_{n(\forall)\le |A|}(\Gamma)$.
\end{lem}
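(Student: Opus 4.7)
The plan is to encode every evaluation of $n\le k$ universal variables by a fixed $|A|$-tuple of universal variables together with a ``routing'' function $\pi\colon \{1,\dots,n\}\to \{1,\dots,|A|\}$. The key observation is that if we enumerate $A = \{\alpha_1,\dots,\alpha_{|A|}\}$ and fix $u_j := \alpha_j$, then every tuple $(a_1,\dots,a_n)\in A^n$ equals $(u_{\pi(1)},\dots,u_{\pi(n)})$ for some $\pi$. Since $n\le k$ and $k,|A|$ are fixed parameters of the problem, the number $|A|^n\le |A|^k$ of routing functions is $O(1)$.

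Given an instance $\forall x_1 \cdots \forall x_n \exists y_1 \cdots \exists y_m\; \Phi$ of $\QCSP^{\Pi_{2}}_{n(\forall)\le k}(\Gamma)$, I would, for each $\pi\colon \{1,\dots,n\}\to\{1,\dots,|A|\}$, introduce a fresh copy $y^\pi_1,\dots,y^\pi_m$ of the existential variables and form $\Phi_\pi$ from $\Phi$ by substituting $x_i\mapsto u_{\pi(i)}$ and $y_j\mapsto y^\pi_j$. The reduction outputs the single $\Pi_{2}$-sentence
$$\forall u_1\cdots\forall u_{|A|}\; \exists (y^\pi_j)_{\pi,j}\; \bigwedge_\pi \Phi_\pi,$$
which has exactly $|A|$ universal quantifiers and size $O(|A|^k\cdot|\Phi|)$, hence polynomial in the input.

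For correctness, the forward direction is immediate: if the original sentence holds, then for any evaluation of $u_1,\dots,u_{|A|}$ and any $\pi$, the tuple $(u_{\pi(1)},\dots,u_{\pi(n)})$ lies in $A^n$, so witnesses for $y^\pi_1,\dots,y^\pi_m$ exist and may be chosen independently for each $\pi$. For the reverse direction, fix any $(a_1,\dots,a_n)\in A^n$ at which the original needs to be checked; setting $u_j:=\alpha_j$ and choosing $\pi(i)$ so that $\alpha_{\pi(i)}=a_i$ yields $u_{\pi(i)}=a_i$, and the conjunct $\Phi_\pi$ supplied by the reduced sentence provides the required witnesses.

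The argument is essentially routine and I do not anticipate a genuine obstacle; the only delicate point is the packaging step that collapses the constantly many per-$\pi$ obligations into a single $\Pi_{2}$-sentence, which works because they all share the universal prefix $\forall u_1\cdots\forall u_{|A|}$ and we may disjoint-rename the existential blocks before conjoining the matrices.
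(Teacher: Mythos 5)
Your proof is correct and is essentially the same as the paper's: both replace the at most $k$ universal variables by a fixed prefix $\forall z_{1}\dots\forall z_{|A|}$, take a conjunction over all routing maps $\varphi\colon[k]\to[|A|]$ with disjoint fresh existential blocks, and verify equivalence by evaluating the $z$-variables at the distinct elements of $A$.
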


Now we can prove the reduction to the CSP with constants.

\begin{THMReductionToCSPTHM}
Suppose 
$\Pol(\Gamma)$ has the PGP property.
Then 
$\QCSP(\Gamma)$ 
is polynomially reducible to 
$\CSP(\Gamma^{*})$.
\end{THMReductionToCSPTHM}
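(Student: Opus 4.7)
The plan is to chain three ingredients already stated in the paper: the switchability characterization of PGP (Theorem~\ref{PGPMeansSwitchable}), the master reduction (Theorem~\ref{ReductionTHM}), and the small-prefix reduction to $\CSP(\Gamma^{*})$ (Lemma~\ref{CSPReductionLemma}). By Theorem~\ref{PGPMeansSwitchable}, the PGP hypothesis yields a constant $r = r(\Gamma)$ such that $\Pol(\Gamma)$ is $r$-switchable; this $r$ depends only on $\Gamma$, not on the input, and it is the parameter that controls every polynomial bound below.

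Given an instance $\Psi$ of $\QCSP(\Gamma)$, I would first normalize its quantifier prefix into the strictly alternating form
$\exists y_1 \forall x_1 \exists y_2 \forall x_2 \dots \exists y_n \forall x_n$
required by Theorem~\ref{ReductionTHM}. This is done by padding with dummy quantifications over fresh variables not appearing in $\Phi$, which at most doubles the length of the prefix. Now apply Theorem~\ref{ReductionTHM}: the sentence $\Psi$ holds if and only if $\omega_{n_1,\dots,n_k}(\Psi)$ holds for every choice of indices $1\le n_1<n_2<\dots<n_k\le n$ with $k\le r$. The number of such index tuples is $\sum_{k=0}^{r}\binom{n}{k} = O(n^{r})$, which is polynomial in the input size because $r$ is a fixed constant.

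Each sentence $\omega_{n_1,\dots,n_k}(\Psi)$ has outer prefix $\forall z_0\forall z_1\dots\forall z_k$ (that is, $k+1$ universal quantifiers) together with the surviving universals $\forall x_{n_1},\dots,\forall x_{n_k}$ interleaved with the existentials $\exists y_i$; the other universals $\forall x_i$ have been erased. Hence each such sentence has at most $2k+1 \le 2r+1$ universal quantifiers and thus is an instance of $\QCSP_{n(\forall)\le 2r+1}(\Gamma)$. By Lemma~\ref{CSPReductionLemma} with $k = 2r+1$, each of these polynomially many sentences is polynomially reducible to an instance of $\CSP(\Gamma^{*})$. Taking the conjunction, on pairwise disjoint sets of variables, of all the resulting CSP instances yields a single instance of $\CSP(\Gamma^{*})$ that is satisfiable iff all the $\omega_{n_1,\dots,n_k}(\Psi)$ hold iff $\Psi$ holds.

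All the technical difficulty is absorbed by the two results I invoke: the hard one is Theorem~\ref{ReductionTHM}, the main reduction of the paper, and to a lesser extent Lemma~\ref{CSPReductionLemma}. Given those, the remaining steps are bookkeeping: normalizing the prefix, counting the $O(n^{r})$ index tuples, checking that each $\omega_{n_1,\dots,n_k}(\Psi)$ indeed carries at most $2r+1$ universal quantifiers, and assembling the outputs into a single polynomial-size $\CSP(\Gamma^{*})$ instance.
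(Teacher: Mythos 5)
Your proof follows exactly the same route as the paper's: invoke Theorem~\ref{PGPMeansSwitchable} to get $r$-switchability, apply Theorem~\ref{ReductionTHM} to obtain $O(n^{r})$ subinstances each with at most $2r+1$ universal quantifiers, reduce each via Lemma~\ref{CSPReductionLemma}, and conjoin. The extra details you supply (padding to an alternating prefix, the $\sum_{k\le r}\binom{n}{k}$ count, and the $2k+1$ bound from the $k+1$ outer $\forall z_j$ plus the $k$ surviving $\forall x_{n_j}$) are correct and simply make explicit what the paper leaves implicit.
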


\begin{proof}
By Theorem~\ref{PGPMeansSwitchable}, 
$\Pol(\Gamma)$ is $r$-switchable for some $r$.
By Theorem \ref{ReductionTHM} 
an instance $I$ of 
$\QCSP(\Gamma)$ is equivalent to a conjunction of polynomially many 
instances 
$I_{1},\dots,I_{t}$ of $\QCSP(\Gamma)$ such that 
each instance has at most $2r+1$ universally quantified variables.
By Lemma \ref{CSPReductionLemma}, 
for each $I_{j}$ we can build an equivalent instance $I_{j}'$
of $\CSP(\Gamma^{*})$. 
Then $I_{1}'\wedge \dots\wedge I_{t}'$ is an instance of $\CSP(\Gamma^{*})$ 
equivalent to the original instance $I$.
\end{proof}

\begin{THMReductionToQCSPTHM}
Suppose 
$\Pol(\Gamma)$ has the PGP property.
Then 
$\QCSP(\Gamma)$ 
is polynomially equivalent to 
$\QCSP^{\Pi_2}_{n(\forall)\le |A|}(\Gamma)$.
\end{THMReductionToQCSPTHM}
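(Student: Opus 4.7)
The plan is to establish the equivalence in both directions. The backward direction is immediate: every instance of $\QCSP^{\Pi_2}_{n(\forall)\le |A|}(\Gamma)$ is already an instance of $\QCSP(\Gamma)$, so the identity map is a trivial polynomial-time reduction. The substantive direction is the forward one, which I would obtain by chaining together the three tools already in hand: Theorem~\ref{ReductionTHM}, Lemma~\ref{PiTwoReductionLemma}, and Lemma~\ref{ReduceNumberOfQuantifiersLemma}.

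First, by Theorem~\ref{PGPMeansSwitchable}, the PGP hypothesis implies that $\Pol(\Gamma)$ is $r$-switchable for some constant $r$ that depends only on $\Gamma$. Given an instance $\Psi$ of $\QCSP(\Gamma)$ (which, by the usual prenex-form manipulations, may be assumed to alternate as in Theorem~\ref{ReductionTHM}), apply Theorem~\ref{ReductionTHM} to conclude that $\Psi$ holds if and only if $\omega_{n_1,\dots,n_k}(\Psi)$ holds for every $1\le n_1<\dots<n_k\le n$ with $k\le r$. There are only $O(n^r)$ such index sets, and each resulting sentence $\omega_{n_1,\dots,n_k}(\Psi)$ has at most $2r+1$ universally quantified variables: the $k\le r$ preserved variables $x_{n_1},\dots,x_{n_k}$ together with the $k+1\le r+1$ newly prepended variables $z_0,\dots,z_k$. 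This yields a polynomial-time reduction of $\QCSP(\Gamma)$ to $\QCSP_{n(\forall)\le 2r+1}(\Gamma)$.

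The remaining work is then a direct composition. By Lemma~\ref{PiTwoReductionLemma} applied with $k=2r+1$, the problem $\QCSP_{n(\forall)\le 2r+1}(\Gamma)$ reduces polynomially to $\QCSP^{\Pi_2}_{n(\forall)\le |A|^{2r+1}}(\Gamma)$; by Lemma~\ref{ReduceNumberOfQuantifiersLemma}, the latter reduces polynomially to $\QCSP^{\Pi_2}_{n(\forall)\le |A|}(\Gamma)$. Composing the three polynomial reductions delivers the desired direction.

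The main difficulty is essentially just bookkeeping, since the genuine content lives in Theorem~\ref{ReductionTHM} and the three lemmas. The one point that must be checked is that all the intermediate bounds are constants with respect to the input: $r$ depends only on $\Gamma$, so the tower $2r+1,\ |A|^{2r+1},\ |A|$ consists of constants, and each step in the chain turns a single instance into polynomially many instances of polynomially bounded size. This ensures that the overall reduction remains polynomial-time.
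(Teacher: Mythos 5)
Your decomposition is exactly the one the paper uses: apply Theorem~\ref{ReductionTHM} to get boundedly many universal quantifiers, then Lemma~\ref{PiTwoReductionLemma}, then Lemma~\ref{ReduceNumberOfQuantifiersLemma}. The backward direction and the accounting of the $2r+1$ universal variables ($k$ preserved $x_{n_j}$'s plus $k+1$ new $z_j$'s) are also correct.

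There is, however, one step you gloss over. Theorem~\ref{ReductionTHM} does not give a single instance of $\QCSP_{n(\forall)\le 2r+1}(\Gamma)$; it gives a \emph{conjunction} of polynomially many such instances $I_1,\dots,I_t$, one for each choice of $1\le n_1<\dots<n_k\le n$, $k\le r$. You acknowledge this at the end (``each step turns a single instance into polynomially many''), but then ``composing the three reductions'' leaves you with polynomially many instances of $\QCSP^{\Pi_2}_{n(\forall)\le|A|}(\Gamma)$ rather than one. The statement (and its use in Theorem~\ref{ReductionToCSPNoConstantsTHM}, which assumes a single equivalent $\Pi_2$ instance is produced) needs a genuine single-instance reduction. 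The missing step is a merge, and it has to happen at the right place in the chain: you cannot merge the $\omega_{n_1,\dots,n_k}(\Psi)$ directly, since they have different interleavings of $\forall$ and $\exists$ and naively concatenating them would blow up the universal-quantifier count. The paper instead first converts each $I_j$ to a $\Pi_2$ sentence $I_j'$ via Lemma~\ref{PiTwoReductionLemma}, then merges $I_1'\wedge\dots\wedge I_t'$ into one $\Pi_2$ sentence by making the existential variables disjoint while identifying the universal variables across all conjuncts (using that $\forall$ distributes over $\wedge$), and only then applies Lemma~\ref{ReduceNumberOfQuantifiersLemma}. Adding this merging step, in that position, closes the gap; the rest of your argument is sound.
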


\begin{proof}
By Theorem~\ref{PGPMeansSwitchable}, 
$\Pol(\Gamma)$ is $r$-switchable for some $r$.
Again, by Theorem \ref{ReductionTHM} 
an instance $I$ of 
$\QCSP(\Gamma)$ is equivalent to a conjunction of polynomially many 
instances $I_{1},\dots,I_{t}$ of $\QCSP(\Gamma)$ such that 
each instance has at most $2r+1$ universally quantified variables.
By Lemma \ref{PiTwoReductionLemma} 
for each $I_{j}$ we can build an equivalent instance $I_{j}'$
of $\QCSP^{\Pi_2}_{n(\forall)\le|A|^{2r+1}}(\Gamma)$.
We additionally require that existential variables in 
$I_{1}',\dots,I_{t}'$ to be different and universal variables 
to be the same.
Then the conjunction $I_{1}'\wedge \dots\wedge I_{t}'$ can be easily transformed into 
an equivalent instance $J$ of $\QCSP^{\Pi_2}_{n(\forall)\le|A|^{2k+1}}(\Gamma)$
by moving all universal quantifiers left.
It remains to apply Lemma \ref{ReduceNumberOfQuantifiersLemma} 
to obtain an equivalent instance of $\QCSP^{\Pi_2}_{n(\forall)\le |A|}(\Gamma)$.
\end{proof}

Recall that $\gamma_{1}^{k},\dots,\gamma_{k}^{k}\subseteq A^{{|A|}^{k}}$ are the $k$ columns  of the matrix whose rows are all the tuples of $A^{k}$ listed in the lexicographic order.
For $\alpha\in A^{{|A|}^{k}}$ by 
$\alpha^{c_1,\dots,c_{|A|}}$ we denote 
the $i$-th element of $\alpha$ where the $i$-th row of the matrix is  
$(c_1,\dots,c_{|A|})$.

\begin{THMReductionToCSPNoConstantsTHM}
Suppose $\Pol(\Gamma)$ has the PGP property. 
Then $\QCSP(\Gamma)$ is polynomially equivalent to 
$\CSP(\Gamma^{|A|^{|A|}}\cup\{\gamma_{1}^{|A|},\dots,\gamma_{|A|}^{|A|}\})$.
\end{THMReductionToCSPNoConstantsTHM}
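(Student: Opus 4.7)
The plan is to chain the argument through Theorem~\ref{ReductionToQCSPTHM}, which already gives a polynomial equivalence $\QCSP(\Gamma) \equiv \QCSP^{\Pi_2}_{n(\forall)\le |A|}(\Gamma)$; it therefore suffices to prove that $\QCSP^{\Pi_2}_{n(\forall)\le |A|}(\Gamma)$ is polynomially equivalent to $\CSP(\Gamma^{|A|^{|A|}}\cup\{\gamma_{1}^{|A|},\dots,\gamma_{|A|}^{|A|}\})$, where each $\gamma_{i}^{|A|}$ is read as the unary singleton relation pinning its argument to that tuple in $A^{|A|^{|A|}}$.

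For the forward direction, I would take a $\Pi_{2}$-sentence $\forall x_{1}\dots\forall x_{k}\exists y_{1}\dots\exists y_{m}\,\Phi$ with $k\le |A|$ and first pad with unused universal variables so that $k=|A|$. I would then introduce CSP variables $c_{1},\dots,c_{|A|},\tilde y_{1},\dots,\tilde y_{m}$ over $A^{|A|^{|A|}}$, impose the singleton constraint $c_{i}\in\{\gamma_{i}^{|A|}\}$ for each $i$, substitute $x_{i}\mapsto c_{i}$ and $y_{j}\mapsto \tilde y_{j}$ throughout $\Phi$, and replace every conjunct $R(\dots)$ by $R^{|A|^{|A|}}(\dots)$. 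By the very definition of the $\gamma_{i}^{|A|}$'s, reading coordinate $j$ of $(c_{1},\dots,c_{|A|})$ recovers the $j$-th lex-ordered tuple of $A^{|A|}$, so the conjunction of $R^{|A|^{|A|}}$-constraints on the $\tilde y_{j}$'s encodes exactly that $\Phi$ holds under every evaluation of $(x_{1},\dots,x_{|A|})$.

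For the converse, given a CSP over $\Gamma^{|A|^{|A|}}\cup\{\gamma_{1}^{|A|},\dots,\gamma_{|A|}^{|A|}\}$ with variables $z_{1},\dots,z_{m}$, I would build a $\Pi_{2}$-sentence $\forall x_{1}\dots\forall x_{|A|}\exists y_{1}\dots\exists y_{m}\,\Phi'$ over $\Gamma$, with $y_{j}$ standing for the coordinate of $z_{j}$ at the current row $(x_{1},\dots,x_{|A|})$. Each conjunct $R^{|A|^{|A|}}(z_{i_{1}},\dots,z_{i_{s}})$ translates to $R(y_{i_{1}},\dots,y_{i_{s}})$, and a singleton conjunct $z_{j}\in\{\gamma_{i_{j}}^{|A|}\}$ is implemented by substituting $x_{i_{j}}$ for $y_{j}$ everywhere and dropping $\exists y_{j}$ (with a trivial preprocessing step rejecting instances in which two such singleton conjuncts impose incompatible values on the same $z_{j}$). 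As $(x_{1},\dots,x_{|A|})$ ranges over $A^{|A|}$, it bijectively enumerates the rows of $M$, giving the needed correspondence.

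The only delicate point is the bookkeeping identification of the lex-ordered rows of $M$ with assignments of $(x_{1},\dots,x_{|A|})$, i.e., unpacking the semantics of the $\gamma_{i}^{|A|}$ constants; once that is pinned down, both translations are clearly linear in size for fixed $|A|$, and composing with Theorem~\ref{ReductionToQCSPTHM} completes the proof.
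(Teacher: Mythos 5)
Your proposal is correct and follows essentially the same route as the paper: reduce to $\QCSP^{\Pi_2}_{n(\forall)\le |A|}(\Gamma)$ via Theorem~\ref{ReductionToQCSPTHM}, then encode the $|A|$ universal variables as singleton $\gamma_i^{|A|}$-constrained CSP variables over $A^{|A|^{|A|}}$ (and decode by the reverse substitution, rejecting instances with conflicting singleton constraints on the same variable). The only cosmetic differences are the explicit mention of padding to exactly $|A|$ universal variables and the renaming $x_i\mapsto c_i$, $y_j\mapsto \tilde y_j$, neither of which affects the argument.
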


\begin{proof}
$\Rightarrow$.  Suppose $\mathcal I$ is an instance of $\QCSP(\Gamma)$.
By Theorem~\ref{ReductionToQCSPTHM}, 
we can build an equivalent instance of 
$\QCSP^{\Pi_{2}}_{n(\forall)\le |A|}(\Gamma)$, 
that is an instance $\Psi=\forall x_{1}\dots\forall x_{|A|}\exists y_{1}\dots\exists y_{n}\;\Phi$, where $\Phi$ is a conjunction of relations from $\Gamma$. 
Let us build an equivalent instance $\mathcal J$ of $\CSP(\Gamma^{|A|^{|A|}}\cup\{\gamma_{1}^{|A|},\dots,\gamma_{|A|}^{|A|}\})$.
To obtain $\mathcal J$ from $\Phi$ we
\begin{enumerate}
    \item replace each relation $R\in \Gamma$ by $R^{{|A|}^{|A|}}\in \Gamma^{{|A|}^{|A|}}$;
    \item add constraint $\gamma_{i}(x_{i})$ for each $i\in\{1,2,\dots,|A|\}$.
\end{enumerate}
Let us show that $\Psi$ and $\mathcal J$ are equivalent.
If $\Psi$ holds, then 
$\Phi$ has a solution 
$(y_1,\dots,y_{n}) = 
(\beta_{1}^{c_1,\dots,c_{|A|}},\dots,\beta_{n}^{c_1,\dots,c_{|A|}})$
for any evaluation of 
$(x_1,\dots,x_{|A|})=(c_1,\dots,c_{|A|})$.
If we interpret each $\beta_{i}$ as a tuple from $A^{{|A|}^{|A|}}$ 
then 
$(x_1,\dots,x_{|A|},y_{1},\dots,y_{n})
= (\gamma_{1},\dots,\gamma_{|A|},
\beta_{1},\dots,\beta_{n})$ is 
a solution of 
$\mathcal J$.
Similarly, if $\mathcal J$ has a solution 
$(x_1,\dots,x_{|A|},y_{1},\dots,y_{n})
= (\gamma_{1},\dots,\gamma_{|A|},
\beta_{1},\dots,\beta_{n})$
then $(y_1,\dots,y_{n}) = 
(\beta_{1}^{c_1,\dots,c_{|A|}},\dots,\beta_{n}^{c_1,\dots,c_{|A|}})$
is a solution of $\Phi$ corresponding to $(x_1,\dots,x_{|A|}) = 
(c_1,\dots,c_{|A|})$.

$\Leftarrow$. Suppose $\mathcal J$ is an instance of 
$\CSP(\Gamma^{|A|^{|A|}}\cup\{\gamma_{1}^{|A|},\dots,\gamma_{|A|}^{|A|}\})$.
To build the quantifier-free part $\Phi$ from $\mathcal J$ we
\begin{enumerate}
    \item introduce $|A|$ new variables
    $x_1,\dots,x_{|A|}$;
    \item if $\mathcal J$ contains 
    $\gamma_{i}(z)$ and $\gamma_{j}(z)$ for some variables $z$ and $i\neq j$, then we reduce 
    $\mathcal J$ to any no-instance of $\QCSP(\Gamma)$;
    \item replace each relation $R^{{|A|}^{|A|}}\in \Gamma^{{|A|}^{|A|}}$ by $R\in \Gamma$;
    
    \item remove each constraint $\gamma_{i}(z)$ 
    and replace the variable $z$ by $x_{i}$.
\end{enumerate}
It is not hard to check that
$\mathcal J$ is equivalent to 
$\Psi=\forall x_{1}\dots\forall x_{|A|}\exists y_{1}\dots\exists y_{n}\;\Phi$.
\end{proof}

As a corollary we obtain the following result.

\begin{THMPGPClassificationTHM}
Suppose $\Pol(\Gamma)$ has the PGP property. 
Then $\QCSP(\Gamma)$ is solvable in polynomial time if 
$\Gamma^{|A|^{|A|}}\cup\{\gamma_{1}^{|A|},\dots,\gamma_{|A|}^{|A|}\}$ admits 
a WNU polymorphism; 
$\QCSP(\Gamma)$ is NP-complete otherwise.
\end{THMPGPClassificationTHM}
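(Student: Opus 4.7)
The plan is to combine Theorem~\ref{ReductionToCSPNoConstantsTHM} with the CSP dichotomy theorem of Bulatov and Zhuk \cite{BulatovProofCSP,ZhukFVConjecture}, which states that for a finite constraint language $\Delta$ (on any finite domain), $\CSP(\Delta)$ is solvable in polynomial time if $\Delta$ admits a WNU polymorphism, and $\CSP(\Delta)$ is NP-complete otherwise. Set $\Delta = \Gamma^{|A|^{|A|}}\cup\{\gamma_{1}^{|A|},\dots,\gamma_{|A|}^{|A|}\}$; notice that $\Delta$ is a finite constraint language on the finite domain $A^{|A|^{|A|}}$, so the CSP dichotomy applies.

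First I would handle the tractable direction. Assume $\Delta$ admits a WNU polymorphism. By the CSP dichotomy, $\CSP(\Delta)$ is solvable in polynomial time. By Theorem~\ref{ReductionToCSPNoConstantsTHM}, $\QCSP(\Gamma)$ is polynomially reducible to $\CSP(\Delta)$, so $\QCSP(\Gamma)$ is also solvable in polynomial time.

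Next I would handle the hard direction. Assume $\Delta$ does not admit a WNU polymorphism. By the CSP dichotomy, $\CSP(\Delta)$ is NP-complete. By the other direction of the polynomial equivalence in Theorem~\ref{ReductionToCSPNoConstantsTHM}, $\CSP(\Delta)$ is polynomially reducible to $\QCSP(\Gamma)$, so $\QCSP(\Gamma)$ is NP-hard. For membership in NP, the reduction of $\QCSP(\Gamma)$ to $\CSP(\Delta)\in \NP$ given by Theorem~\ref{ReductionToCSPNoConstantsTHM} shows that $\QCSP(\Gamma)\in \NP$ as well. Therefore $\QCSP(\Gamma)$ is NP-complete.

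There is no real obstacle here beyond invoking the two ingredients correctly: the whole work has been pushed into Theorem~\ref{ReductionToCSPNoConstantsTHM}, which collapses $\QCSP(\Gamma)$ to a single CSP over an enlarged domain, so the complexity classification becomes an immediate transfer through the CSP dichotomy. The only thing to double-check is that polynomial equivalence of $\QCSP(\Gamma)$ and $\CSP(\Delta)$ is bidirectional in the appropriate sense (which it is, by the statement of Theorem~\ref{ReductionToCSPNoConstantsTHM}), and that the CSP dichotomy is stated for arbitrary finite domains so that it legitimately applies to $\Delta$ on the exponentially larger domain $A^{|A|^{|A|}}$.
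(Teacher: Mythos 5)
Your proof is correct and takes the same route as the paper: the paper simply states Theorem~\ref{PGPClassificationTHM} as an immediate corollary of Theorem~\ref{ReductionToCSPNoConstantsTHM} together with the Bulatov--Zhuk CSP dichotomy, which is exactly the transfer you spell out. Your explicit treatment of both directions of the polynomial equivalence (and the observation that $\CSP(\Delta)\in\NP$ gives membership) fills in precisely what the paper leaves implicit.
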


\section{Transformations}\label{TransformationsSection}
In this section we define four transformations of 
instances, which are probably well-known. 

\subsection{Universal quantifiers removal.}

Suppose $A = \{a_1,\dots,a_s\}$, 
$\Psi = \exists y_1 \dots\exists y_{j} \forall x\; \Phi$ is a QCSP sentence
($\Phi$ may contain quantifiers). 
Then 
$\Psi$ is equivalent to 
$$\exists y_1 \dots\exists y_{j} \exists x^{1}\dots\exists x^{s}\;
\bigwedge_{i=1}^{s} (\Phi_{i}\wedge x^{i} = a_{i}),$$ 
where $\Phi_{i}$ is obtained from $\Phi$ by replacement of 
$x$ by $x^{i}$.
If the number of universal quantifiers is bounded, we can remove all the universally quantified variables in polynomial time,  
and the result can be written in the prenex form, which is a CSP sentence.  
Thus, we have the following lemma.

\begin{LemmaCSPReductionLemma}
$\QCSP_{n(\forall)\le k}(\Gamma)$ 
is polynomially reducible to 
$\CSP(\Gamma^{*})$.
\end{LemmaCSPReductionLemma}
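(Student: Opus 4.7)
The plan is to apply the universal-quantifier-elimination transformation stated in the preceding paragraph iteratively, one universal quantifier at a time, and then read the resulting existential sentence as a $\CSP(\Gamma^{*})$ instance. Given an input instance $\Psi$ in prenex form with at most $k$ universal quantifiers, I pick any universal $\forall x$ and replace the subformula $\forall x\; \Phi'$ by $\exists x^{1}\dots\exists x^{|A|}\; \bigwedge_{i=1}^{|A|} (\Phi'_{i}\wedge x^{i}=a_{i})$, where $a_{1},\dots,a_{|A|}$ enumerate $A$ and $\Phi'_{i}$ is obtained from $\Phi'$ by substituting $x^{i}$ for $x$ (with fresh renaming of any variables bound inside $\Phi'$ so the copies do not share witnesses). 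The unary constraints $x^{i}=a_{i}$ are available because they belong to $\Gamma^{*}$.

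Iterating, each application strictly decreases the number of universal quantifiers, so after at most $k$ steps none remain. Each application replaces a subformula by $|A|$ renamed copies of it together with $|A|$ unary constant constraints, so the size of the final formula is at most $|A|^{k}$ times the size of the input. Since $|A|$ and $k$ are fixed parameters of the problem and not part of the input size, this is only a constant-factor blow-up, hence certainly polynomial.

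The resulting sentence is purely existential; pulling all existential quantifiers to the front yields a prenex formula whose matrix is a conjunction of relations from $\Gamma\cup\{(x=a)\mid a\in A\}=\Gamma^{*}$, i.e., an instance of $\CSP(\Gamma^{*})$. Semantic equivalence between the original and the transformed instance follows by induction on the number of eliminations, using the identity $\forall x\; \Phi'(x)\equiv \bigwedge_{a\in A}\Phi'(a)$ that the transformation encodes.

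I do not expect any genuine obstacle here: the argument is exactly the textbook unrolling of a universal quantifier over a finite domain, combined with the observation that a bounded number of unrollings causes only a constant-factor size increase. The only bookkeeping item to be careful about is the fresh renaming of bound variables inside each of the $|A|$ copies, so that existential witnesses taken from distinct copies are not accidentally identified; once that is done, correctness and the polynomial size bound are both immediate.
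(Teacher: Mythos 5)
Your proof matches the paper's approach exactly: unroll each universal quantifier $\forall x\;\Phi'$ into $|A|$ existentially quantified copies with added constant constraints $x^i=a_i\in\Gamma^{*}$, iterate, and observe that for fixed $k$ and $|A|$ the blow-up is by the constant factor $|A|^{k}$. Your additional remark about fresh renaming of bound variables inside each copy is correct and worth making explicit (the paper leaves this implicit). One small inaccuracy: the claim that ``each application strictly decreases the number of universal quantifiers'' only holds if you always eliminate an innermost $\forall$; if you expand an outer one first, its $|A|$ copies each duplicate the remaining universals, so the count can go up (e.g.\ from $k$ to $|A|(k-1)$). This is easily fixed by specifying an inside-out elimination order (or, equivalently, by bounding the total number of eliminations by $1+|A|+\dots+|A|^{k-1}$, which is still a constant), and does not affect the correctness of the overall reduction.
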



\subsection{Moving universal quantifiers left.}
Suppose $|A| = s$, 
$\Psi = \exists y_1 \dots\exists y_{j} \forall x\; \Phi$ is a QCSP sentence
($\Phi$ may contain quantifiers). 
Then
$\Psi$ is equivalent to 
$$\forall x^{1} \dots \forall x^{s}
 \exists y_1 \dots\exists y_{j}
\bigwedge_{i=1}^{s} (\Phi_{i}),$$ 
where $\Phi_{i}$ is obtained from $\Phi$ by replacement of 
$x$ by $x^{i}$.
In this way we can move all the universal quantifiers left starting with the right one, 
and the result can be written in the prenex form to get a QCSP sentence.  
Thus, we have the following lemma.


\begin{LemmaPiTwoReductionLemma}
$\QCSP_{n(\forall)\le k}(\Gamma)$ 
is polynomially reducible to 
$\QCSP^{\Pi_{2}}_{n(\forall)\le|A|^{k}}(\Gamma)$.
\end{LemmaPiTwoReductionLemma}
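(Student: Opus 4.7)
The plan is to apply the transformation established just before the lemma iteratively, pushing universal quantifiers leftward until the sentence is in $\Pi_{2}$ prenex form. Specifically, given an input instance $\Psi$ of $\QCSP_{n(\forall)\le k}(\Gamma)$, I will process the universal quantifiers from the innermost (rightmost) toward the outermost (leftmost). At each step I locate the rightmost still-unprocessed $\forall x$ together with the existential block $\exists \vec{y}$ immediately on its left, and apply the equivalence $\exists \vec{y}\,\forall x\,\Phi \equiv \forall x^{1}\dots\forall x^{s}\,\exists \vec{y}\,\bigwedge_{i=1}^{s}\Phi_{i}$ with $s=|A|$, where each $\Phi_{i}$ is obtained from $\Phi$ by replacing $x$ by the fresh variable $x^{i}$ and renaming all inner bound variables to fresh copies so that the $s$ conjuncts share no bound variables. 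Each application introduces $s$ new universal variables and duplicates $\Phi$ by a factor of $s$.

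After every application I prenex-normalize. Independent $\forall$- and $\exists$-quantifiers occurring inside different conjuncts of $\bigwedge_{i}\Phi_{i}$ can be pulled out because, after renaming, they refer to pairwise disjoint sets of variables. This rearrangement is arranged so that at the next step the existential block we must swap past is once again adjacent to the universal quantifier being processed. Iterating this until no $\exists$-quantifier appears to the left of any $\forall$-quantifier yields a $\Pi_{2}$ sentence logically equivalent to $\Psi$, which is then passed along as the produced instance of $\QCSP^{\Pi_{2}}_{n(\forall)\le|A|^{k}}(\Gamma)$.

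Two quantitative facts then need to be checked. For the polynomial-time bound, each transformation multiplies the size of the affected subformula by $s$; since $k$ is fixed and the number of required applications is bounded by a function of $k$ alone, the total blow-up is polynomial in the size of $\Psi$ (indeed polynomial of degree depending only on $|A|$ and $k$). For the bound on the number of universals, the key observation is that each original universal $\forall x_{j}$ gives rise in the final $\Pi_{2}$ form to copies naturally indexed by sequences of choices of the preceding universals $x_{1},\dots,x_{j-1}$, contributing at most $|A|^{j-1}$ variables; summing yields $\sum_{j=1}^{k}|A|^{j-1}<|A|^{k}$, the required bound.

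The step I expect to be the main obstacle is making the indexing argument of the previous paragraph fully rigorous: a naive right-to-left sweep that pushes one universal past one existential at a time can cause the duplications from distinct levels to compound, temporarily producing far more universals than $|A|^{k}$. Some care is therefore needed in the scheduling of the transformations and in prenexing sibling blocks together (so that the $s$ conjuncts arising from one application share prenexed $\forall$-blocks of the \emph{same} renamed variables rather than fresh ones at every level), so that when the dust settles each universal of the $\Pi_{2}$ form is in bijective correspondence with a unique prefix of universal choices in the game tree of $\Psi$, and the count $|A|^{k}$ is achieved.
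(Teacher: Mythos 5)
Your proof is correct and follows the same route the paper takes: iterated application of the ``move one universal past the adjacent existential block'' step, working from the innermost $\forall$ outward, followed by prenexing. You are right that the naive sweep would overcount, and your fix is the right one: when you prenex the $s$ sibling conjuncts produced by one commutation, you may share the universally quantified blocks (this is just $\bigwedge_j \forall \vec z\,\phi_j \equiv \forall \vec z\,\bigwedge_j\phi_j$, applied after $\alpha$-renaming) while the existential blocks must remain disjoint; with that, the copies of $x_j$ are indeed indexed by $A^{j-1}$ and the total $\sum_{j=1}^{k}|A|^{j-1}\le|A|^{k}$ is achieved, matching the paper's stated bound. The paper is even terser than you are about this bookkeeping, so your explicit discussion of the compounding pitfall and its resolution is a faithful fleshing-out rather than a departure.
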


\subsection{Reducing the number of quantifiers}


Suppose $A = \{a_1,\dots,a_s\}$.
Suppose we have an instance 
$\Psi=\forall x_{1}\dots\forall x_{k}\exists y_{1}\dots\exists y_{n}\;\Phi$
of $\QCSP^{\Pi_{2}}_{n(\forall)\le k}$.
We can make it easier by reducing the number of universal 
quantifiers using the following trick.
Since $k$ is a constant, there are 
only $|A|^{k}$ possible evaluations
of $(x_1,\dots,x_{k})$ we need to check.
We do not want to use constants, that is why 
we do the following.

For each mapping $\varphi$
from $[k]:=\{1,2,\dots,k\}$ 
to $[s]:=\{1,2,\dots,s\}$ 
we define the sentence $\Psi_{\varphi}$ where each variable 
$x_{i}$ is replaced by $z_{\varphi(i)}$, 
and the universal quantification is 
$\forall z_{1}\dots\forall z_{s}$ instead of 
$\forall x_1\dots\forall x_k$.
Then 
$\Psi$ is equivalent to 
$\bigwedge\limits_{\varphi:[k]\to [s]}
\Psi_{\varphi}$, 
which can be transformed to a $\Pi_2$-sentence whose 
universally quantified variables are $\forall z_{1}\dots\forall z_{s}$.
Thus, we proved the following lemma.

\begin{LemmaReduceNumberOfQuantifiersLemma}
$\QCSP^{\Pi_{2}}_{n(\forall)\le k}(\Gamma)$
is polynomially reducible to 
$\QCSP^{\Pi_{2}}_{n(\forall)\le |A|}(\Gamma)$.
\end{LemmaReduceNumberOfQuantifiersLemma}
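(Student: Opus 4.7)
The plan is to exploit that in the problem class $\QCSP^{\Pi_{2}}_{n(\forall)\le k}(\Gamma)$ the bound $k$ is a fixed constant, so there are only $|A|^k$ possible evaluations of the universally quantified block $(x_1,\dots,x_k)$. Any such evaluation $(a_1,\dots,a_k)\in A^k$ can be expressed as $(c_{\varphi(1)},\dots,c_{\varphi(k)})$ for some enumeration $A=\{c_1,\dots,c_{|A|}\}$ and some map $\varphi\colon[k]\to[|A|]$. This suggests replacing the $k$ universal variables by $|A|$ fresh universal variables $z_1,\dots,z_{|A|}$, and compensating by taking a conjunction over all maps $\varphi$.

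Concretely, given an input $\Psi=\forall x_1\dots\forall x_k\exists y_1\dots\exists y_m\;\Phi$, for each $\varphi\colon[k]\to[|A|]$ I form the sentence $\Psi_\varphi=\forall z_1\dots\forall z_{|A|}\exists y_1^\varphi\dots\exists y_m^\varphi\;\Phi_\varphi$, where $\Phi_\varphi$ is obtained from $\Phi$ by renaming each $x_i$ to $z_{\varphi(i)}$ and each $y_j$ to the fresh variable $y_j^\varphi$. I then claim that $\Psi$ is equivalent to $\bigwedge_{\varphi\colon[k]\to[|A|]}\Psi_\varphi$.

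For the forward direction, suppose $\Psi$ holds and fix any evaluation $(z_1,\dots,z_{|A|})=(c_1,\dots,c_{|A|})\in A^{|A|}$; for any $\varphi$ the tuple $(c_{\varphi(1)},\dots,c_{\varphi(k)})$ is a legitimate evaluation of $(x_1,\dots,x_k)$, so the witnesses supplied by $\Psi$ satisfy $\Phi_\varphi$, establishing $\Psi_\varphi$. For the backward direction, fix an arbitrary evaluation $(x_1,\dots,x_k)=(a_1,\dots,a_k)$, choose $\varphi$ with $c_{\varphi(i)}=a_i$, and instantiate $\Psi_\varphi$ at $(z_1,\dots,z_{|A|})=(c_1,\dots,c_{|A|})$; the resulting existential witnesses satisfy $\Phi$ with the required $x$-values.

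Since the $z$-block is shared across all $\varphi$ and the existential blocks $y_1^\varphi,\dots,y_m^\varphi$ have been made pairwise disjoint, the conjunction can be rewritten in prenex form as $\forall z_1\dots\forall z_{|A|}\exists\bigl(\,\bigcup_\varphi\{y_1^\varphi,\dots,y_m^\varphi\}\bigr)\;\bigwedge_\varphi\Phi_\varphi$, which is a $\Pi_2$-sentence with exactly $|A|$ universal quantifiers. The number of conjuncts is $|A|^k$ and the total number of existential variables is $|A|^k\cdot m$; because $k$ and $|A|$ are constants of the problem class, this is polynomial in the input size. There is no substantial semantic obstacle; the only care required is the bookkeeping of fresh existential variables so that the conjuncts may be merged independently, and the verification that the enumeration trick covers every tuple in $A^k$.
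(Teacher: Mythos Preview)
Your proof is correct and follows essentially the same approach as the paper: for each map $\varphi\colon[k]\to[|A|]$ replace $x_i$ by $z_{\varphi(i)}$, take the conjunction over all $\varphi$, and merge into a single $\Pi_2$-sentence with the shared $z$-block. If anything, your write-up is more explicit than the paper's, spelling out the fresh existential renamings and the two directions of the equivalence that the paper simply asserts.
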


\subsection{Transformation to a $\Pi_2$-formula.}
Suppose 
$\Psi = \exists y_1\forall x_1\exists y_2\forall x_2 \dots
\exists y_n\forall x_{n} \Phi$.
We define a transformation 
$\zeta$ that transforms $\Psi$ 
into an equivalent $\Pi_{2}$-formula. 
Note that 
$\zeta(\Psi)$ is of exponential size on the size of $\Psi$. 

For each tuple 
$(a_1,\dots,a_n)\in A^{n}$ 
the formula
$\Phi_{a_1,\dots,a_n}$ is obtained from $\Phi$
by  
\begin{enumerate}
    \item replacement of each variable 
    $y_{i}$ by $y_{i}^{a_1,\dots,a_{i-1}}$
    \item replacement of each variable 
    $x_{i}$ by $x_{i}^{a_1,\dots,a_{i}}$
\end{enumerate}

By $\zeta(\Psi)$ we denote the sentence 
such that 
\begin{enumerate}
    \item its quantifier-free part is
$\bigwedge\limits_{(a_{1},\dots,a_{n})\in A^{n}}
\Phi_{a_1,\dots,a_n}$
\item first we universally quantify the $x$-variables, 
then we existentially quantify the $y$-variables.
\end{enumerate}

Let us show that 
$\Psi$ and $\zeta(\Psi)$ are equivalent.

\begin{lem}\label{PsiImpliesZetaPsi}
$\Psi \rightarrow \zeta(\Psi)$.
\end{lem}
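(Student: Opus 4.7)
The plan is to Skolemize $\Psi$ and then use the Skolem witnesses to fabricate values for the existentially quantified variables of $\zeta(\Psi)$. The crucial structural point is that the quantifier prefix of $\zeta(\Psi)$ has all universal quantifiers in front and all existential quantifiers after, so every $y_i^{a_1,\dots,a_{i-1}}$ is allowed to depend on every $x_j^{b_1,\dots,b_j}$. This gives plenty of room to imitate the Skolem tree of $\Psi$ along each branch $(a_1,\dots,a_n)\in A^n$ simultaneously.

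First, I would unpack the assumption: $\Psi$ holding means there exist functions $f_i\colon A^{i-1}\to A$, $i=1,\dots,n$, such that for every $(c_1,\dots,c_n)\in A^n$ the quantifier-free formula $\Phi$ is satisfied under the assignment $y_i\mapsto f_i(c_1,\dots,c_{i-1})$, $x_i\mapsto c_i$. Next, to check $\zeta(\Psi)$, I would fix an arbitrary evaluation of the universal variables $\{x_i^{a_1,\dots,a_i}\}$ and propose the witnesses
$$y_i^{a_1,\dots,a_{i-1}} := f_i\bigl(x_1^{a_1},\, x_2^{a_1,a_2},\, \dots,\, x_{i-1}^{a_1,\dots,a_{i-1}}\bigr)$$
for the existential variables. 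To verify a single conjunct $\Phi_{a_1,\dots,a_n}$, observe that under this substitution each occurrence of $y_i$ in $\Phi$ becomes $f_i(x_1^{a_1},\dots,x_{i-1}^{a_1,\dots,a_{i-1}})$ and each $x_i$ becomes $x_i^{a_1,\dots,a_i}$. Writing $c_i := x_i^{a_1,\dots,a_i}$ for $i=1,\dots,n$, the formula $\Phi_{a_1,\dots,a_n}$ becomes precisely the Skolem instantiation of $\Phi$ at $(c_1,\dots,c_n)$, hence holds by the assumption on $\Psi$. Since the $(a_1,\dots,a_n)$ was arbitrary, the whole conjunction is satisfied.

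I do not anticipate a genuine obstacle; the only thing to watch is the bookkeeping of the superscript tuples. The definition of $y_i^{a_1,\dots,a_{i-1}}$ must use as the arguments of $f_i$ the $x$-variables with prefixes $(a_1),(a_1,a_2),\dots,(a_1,\dots,a_{i-1})$, not some other indexing, precisely so that the inner composition matches the substitution rule $x_j\mapsto x_j^{a_1,\dots,a_j}$ applied by $\Phi_{a_1,\dots,a_n}$ for $j<i$. Once this alignment is pinned down, the verification is immediate, and notably no properties of $\Gamma$, of $\Pol(\Gamma)$, or of switchability are needed for this direction.
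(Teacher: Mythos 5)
Your proposal is correct and is essentially the paper's own proof: the same Skolem functions $f_i$, the same witness assignment $y_i^{a_1,\dots,a_{i-1}} := f_i(x_1^{a_1},\dots,x_{i-1}^{a_1,\dots,a_{i-1}})$, and the same observation that each conjunct $\Phi_{a_1,\dots,a_n}$ is just a renamed instantiation of $\Phi$. You merely spell out the verification step in more detail than the paper does.
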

\begin{proof}
Suppose 
$(f_{1}(), x_{1},
f_{2}(x_1), x_{2},
f_{3}(x_1,x_2), x_{3},...,)$ is a solution of $\Psi$.
Let us show that a solution of $\zeta(\Psi)$ can be defined by 
$$y_{i}^{a_{1},\dots,a_{i-1}} = 
f_{i}(x_{1}^{a_1},x_{2}^{a_1,a_2},x_{3}^{a_1,a_2,a_{3}},
\dots,x_{i-1}^{a_{1},\dots,a_{i-1}}).$$
In fact, each 
conjunctive formula $\Phi_{a_1,\dots,a_n}$
holds as it is just $\Phi$ with all the variables renamed.
\end{proof}

\begin{lem}\label{ZetaPsiImpliesPsi}
$\zeta(\Psi)\rightarrow \Psi $.
\end{lem}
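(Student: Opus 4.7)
The plan is to use the fact that $\zeta(\Psi)$ is a $\Pi_2$-sentence, so assuming it holds we may pick \emph{any} assignment to its universally quantified variables and get a matching assignment of the existentials; from one cleverly chosen such assignment we will read off Skolem functions for $\Psi$.

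First, assume $\zeta(\Psi)$ holds and fix the specific evaluation of the universally quantified variables given by $x_j^{b_1,\dots,b_j}:=b_j$ for every $j\in\{1,\dots,n\}$ and every $(b_1,\dots,b_j)\in A^j$. By $\zeta(\Psi)$, for this evaluation there exist values $y_i^{a_1,\dots,a_{i-1}}\in A$ (for every $i$ and every $(a_1,\dots,a_{i-1})\in A^{i-1}$) such that every conjunct $\Phi_{a_1,\dots,a_n}$ is satisfied.

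Second, use these values to define Skolem functions for $\Psi$: put
$f_i(a_1,\dots,a_{i-1}):=y_i^{a_1,\dots,a_{i-1}}$
for $i=1,\dots,n$. Note that this is a legitimate strategy for the existential player in $\Psi$ because $f_i$ depends only on the first $i-1$ universal moves.

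Third, verify that this strategy wins. Given an arbitrary play $(x_1,\dots,x_n)\in A^n$ of the universal player in $\Psi$, consider the single conjunct $\Phi_{x_1,\dots,x_n}$ of $\zeta(\Psi)$: under our chosen evaluation of the universal variables its $x$-symbols $x_j^{x_1,\dots,x_j}$ take the values $x_j$, and by construction the $y$-symbols $y_i^{x_1,\dots,x_{i-1}}$ take the values $f_i(x_1,\dots,x_{i-1})$. Hence the truth of this conjunct is precisely the statement that $\Phi$ is satisfied by the play of $\Psi$ determined by $x_1,\dots,x_n$ and the Skolem functions $f_1,\dots,f_n$. Since every conjunct of $\zeta(\Psi)$ holds, $\Psi$ holds.

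The argument is essentially a renaming of variables, so there is no genuine technical obstacle; the only care needed is bookkeeping, namely noticing that the $y$-variable $y_i^{a_1,\dots,a_{i-1}}$ is indexed exactly by the information available to the existential player at round $i$ of $\Psi$, which is what makes the assignment from step two a valid strategy rather than a function depending on later universal moves.
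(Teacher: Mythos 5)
Your proposal is correct and takes essentially the same route as the paper: fix the diagonal evaluation $x_i^{a_1,\dots,a_i}:=a_i$, read off Skolem functions $f_i(a_1,\dots,a_{i-1}):=y_i^{a_1,\dots,a_{i-1}}$ from the witnessing existential assignment, and verify a given universal play $(a_1,\dots,a_n)$ by inspecting the conjunct $\Phi_{a_1,\dots,a_n}$. The only difference is that you spell out the bookkeeping more explicitly than the paper does.
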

\begin{proof}
Set 
$x_{i}^{a_{1},\dots, a_{i}} = 
a_{i}$ for every $i$ and 
$a_{1},\dots, a_{i}\in A$.
Since 
$\zeta(\Psi)$ holds, 
for each assignment of 
the $x$-variables there exists 
a proper assignment of the $y$-variables. 
We define 
a solution to $\Psi$ by 
$(f_{1}(), x_{1},
f_{2}(x_1), x_{2},
f_{3}(x_1,x_2), x_{3},...,)$, 
where 
$f_{i}(a_{1},\dots, a_{i-1}) = 
y_{i}^{a_{1},\dots, a_{i-1}}$.
To check that the conjunctive formula $\Phi$ holds
for $(x_1,\dots,x_{n})=(a_1,\dots,a_n)$ 
we consider $\Phi_{a_1,\dots,a_{n}}$.
\end{proof}

\section{Proof of Theorem \ref{ReductionTHM}}\label{ProofSection}

\begin{THMReductionTHM}
Suppose 
$\Pol(\Gamma)$ is $r$-switchable, 
$\Psi= \exists y_1\forall x_1\exists y_2\forall x_2 \dots
\exists y_n\forall x_{n} \Phi$, where $\Phi$ is a conjunctive formula over $\Gamma$. 
Then 
$\Psi$ holds
if and only if 
for every $1\le n_1<n_2<\dots<n_{k}\le n$, 
$k\le r$,
the sentence
$\omega_{n_1,\dots,n_{k}}(\Psi)$ holds.
\end{THMReductionTHM}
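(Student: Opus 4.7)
The claimed equivalence has two directions; the backward one is the substantive content.

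For the forward direction ($\Rightarrow$), given Skolem functions $f_1, \dots, f_n$ witnessing $\Psi$, fix any $1 \le n_1 < \dots < n_k \le n$ with $k \le r$. For each assignment of the universal variables $(z_0, x_{n_1}, z_1, \dots, x_{n_k}, z_k)$ of $\omega_{n_1,\dots,n_k}(\Psi)$, I would define $v_l := z_j$ when $n_j < l < n_{j+1}$ (with the conventions $n_0 = 0$, $n_{k+1} = n+1$) and $v_{n_j} := x_{n_j}$, and take $y_i := f_i(v_1, \dots, v_{i-1})$ as the existential witness. Since $\omega_{n_1,\dots,n_k}(\Phi)$ is obtained from $\Phi$ by the substitution $x_l \mapsto v_l$, it is satisfied by this assignment because $\Phi(v_1, \dots, v_n, f_1(), \dots, f_n(v_1, \dots, v_{n-1}))$ holds by hypothesis.

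For the backward direction ($\Leftarrow$), the plan is to apply the $\Pi_2$-transformation $\zeta$ of Section~\ref{TransformationsSection} and then exploit polymorphism preservation together with switchability. Converting $\Psi$ to $\zeta(\Psi) = \forall X \exists Y \bigwedge_{\vec a \in A^n} \Phi_{\vec a}$, with $X = (x_i^{\vec a})$ ranging over $A^N$ for $N = \sum_{i=1}^{n} |A|^i$, I would define $R_\zeta \subseteq A^N$ as the pp-definable relation of $X$-assignments admitting a satisfying $Y$; this is preserved by $\Pol(\Gamma)$. The aim is to show $R_\zeta = A^N$, whence the identity assignment $x_i^{\vec a} = a_i$ lies in $R_\zeta$ and $\Psi$ follows by Lemma~\ref{ZetaPsiImpliesPsi}. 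To this end I would (i) show that each $\omega_{n_1,\dots,n_k}(\Psi)$-solution, as adversary parameters and substrategies vary over the $\omega$-game, contributes a family of tuples $T_{n_1,\dots,n_k} \subseteq R_\zeta$ with a characteristic block structure (level-constant at non-special levels, depending on strictly earlier special-position coordinates at levels $n_j$); (ii) endow $A^N$ with a linear ordering of coordinates suited to this block structure, for instance by level $i$ and then lexicographically in $\vec a$; and (iii) invoke $r$-switchability of $\Pol(\Gamma)$, whereby $A^N$ is $\Pol(\Gamma)$-generated by tuples with at most $r$ switches in this ordering, and match every such tuple to one expressible from an $\omega_{n_1,\dots,n_k}(\Psi)$-evaluation with $k \le r$.

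\textbf{Main obstacle.} The subtle step is (iii): verifying that every $r$-switch tuple in the chosen ordering of $A^N$ is in the $\Pol(\Gamma)$-closure of $\bigcup T_{n_1,\dots,n_k}$. The $\omega$-induced tuples carry a restricted form in which, at coordinate $(i, \vec a)$, the value depends on $\vec a$ only through strictly earlier special coordinates, and coordinate-wise polymorphisms preserve this restriction, so a naive closure cannot reach the identity assignment where $x_i^{\vec a}$ depends on $a_i$. Reconciling the $\omega$-block structure (constant blocks with history-dependent jumps at special levels) with the full $\le r$-switch generating set (which may include within-level variation) is the combinatorial heart of the argument, and will likely require either a cleverer coordinate ordering that forces the block boundaries of $\omega$-tuples to align with $\le r$ switches, or a stratified argument applying $r$-switchability level by level in combination with the other transformations of Section~\ref{TransformationsSection}.
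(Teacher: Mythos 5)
Your plan is essentially the paper's proof: pass to $\zeta(\Psi)$, define $R \subseteq A^N$ as the set of $X$-assignments admitting a satisfying $Y$, order the coordinates by level, and reduce showing $R = A^N$ (via $r$-switchability) to showing every tuple with at most $r$ switches lies in $R$. The ``main obstacle'' you flag, however, is not a real gap, and the worry that ``a naive closure cannot reach the identity assignment'' is a red herring: the identity assignment never needs to be in the $\Pol(\Gamma)$-closure of the $\omega$-induced tuples alone. $r$-switchability already guarantees that the $\le r$-switch tuples generate all of $A^N$; all that remains is to verify each such tuple belongs to $R$, and for that there is no closure step at all.

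For that verification, take $\alpha \in A^N$ with at most $r$ switches in the level ordering, and let $n_1 < \dots < n_k$ be the levels that contain at least one switch; then $k \le r$ and $\alpha$ is constant on the stretch of coordinates between consecutive special levels. The special levels themselves are \emph{not} constrained to be constant: $\alpha$ may vary arbitrarily inside level $n_j$, and this is exactly what the retained universal quantifier $\forall x_{n_j}$ in $\omega_{n_1,\dots,n_k}(\Psi)$ absorbs. Given Skolem functions $y_i = f_i(z_0,\dots,z_k, x_{n_1},\dots,x_{n_{\ell(i)}})$ witnessing $\omega_{n_1,\dots,n_k}(\Psi)$, build the $Y$-assignment for $\zeta(\Psi)$ by fixing each $z_j$ to the constant inter-block value $b_j$ once and for all, but plugging in $x_{n_j}$ \emph{per conjunct}: for $\Phi_{a_1,\dots,a_n}$, set $x_{n_j}$ to the coordinate of $\alpha$ at $x_{n_j}^{a_1,\dots,a_{n_j}}$, which may depend on the prefix $(a_1,\dots,a_{n_j})$. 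Since $n_j < i$ whenever $x_{n_j}$ is an argument of $f_i$, the resulting value $y_i^{a_1,\dots,a_{i-1}}$ is a well-defined function of $(a_1,\dots,a_{i-1})$, and each conjunct $\Phi_{a_1,\dots,a_n}$ is then a renamed instance of $\omega_{n_1,\dots,n_k}(\Phi)$ evaluated at a universal choice that the strategy handles. So every $\le r$-switch tuple is in $R$, the level ordering you already proposed suffices, and no stratified argument or cleverer ordering is needed.
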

\begin{proof}
To obtain $\omega_{n_1,\dots,n_{k}}(\Psi)$ from $\Psi$ 
we just identify universally quantified variables and move universal quantifiers left,
hence $\omega_{n_1,\dots,n_{k}}(\Psi)$ holds whenever $\Psi$ holds. 

Assume that $\omega_{n_1,\dots,n_{k}}(\Psi)$ holds 
for every $1\le n_1<n_2<\dots<n_k\le n$. 
Let us show that $\Psi$ holds.
By Lemmas \ref{PsiImpliesZetaPsi} and \ref{ZetaPsiImpliesPsi}
$\Psi$ and $\zeta(\Psi)$ are equivalent, hence it is sufficient to 
show that 
$\zeta(\Psi)$ holds.
Let $R\subseteq A^{t}$ be the relation defined by $\zeta(\Psi)$ where all universal 
quantifiers are removed.
We assume that the order of coordinates in $R$ corresponds to 
the order of variables $x_1,x_2,\dots,x_n$.

We need to show that 
$R$ is full. Since $\Pol(R)$ is $r$-switchable, 
it is sufficient to show that all tuples with at most $r$ switches are in $R$.
Consider a tuple $\alpha$ whose switches are 
in the positions corresponding to the variables 
$x_{n_1},x_{n_2},\dots,x_{n_{k}}$, where $k\le r$.
Consider a strategy for the Existential Player in 
$\omega_{n_1,\dots,n_{k}}(\Psi)$, 
that is 
$y_{i} = f_{i}(z_{0},z_1,\dots,z_{k},x_{n_1},\dots,x_{n_{\ell(i)}})$, where 
$\ell(i)$ is the maximal number such that 
$n_{\ell(i)}<i$.
Then an assignment to the $y$-variables (in the definition of $R$)
corresponding to $\alpha$ can be defined by 
$y_{i}^{a_{1},\dots,a_{i-1}} = f_{i}(b_{0},b_1,\dots,b_{k},c_{1},\dots,c_{\ell(i)})$, 
where 
\begin{itemize}
    \item  $b_{0}$ is the value of the coordinates in $\alpha$ corresponding 
    to the $x$-variables before $x_{n_1}$;
    \item  each $b_{j}$ is the value of the coordinates in $\alpha$ corresponding 
to the $x$-variables between $x_{n_j}$ and $x_{n_{j+1}}$;
\item  $b_{k}$ is the value of the coordinates in $\alpha$ corresponding 
    to the $x$-variables after $x_{n_k}$;
    \item  each $c_{j}$ is the value of the coordinates in $\alpha$ corresponding 
to the variable $x_{n_j}^{a_1,\dots,a_{n_{j}}}$.
\end{itemize}
Since the functions $f_{1},\dots,f_{n}$ give a solution to 
$\omega_{n_1,\dots,n_{k}}(\Psi)$, we defined a proper assignment 
to $\xi(\Psi)$.
Hence $\alpha\in R$ and $R$ is full.
\end{proof}

\bibliographystyle{plain}
\bibliography{refs}

\begin{thebibliography}{10}

\bibitem{BBCJK}
Ferdinand B{\"o}rner, Andrei~A. Bulatov, Hubie Chen, Peter Jeavons, and
  Andrei~A. Krokhin.
\newblock The complexity of constraint satisfaction games and {Q}{C}{S}{P}.
\newblock {\em Inf. Comput.}, 207(9):923--944, 2009.

\bibitem{BulatovFVConjecture}
Andrei~A. Bulatov.
\newblock {A dichotomy theorem for nonuniform {CSP}s}.
\newblock In {\em 2017 IEEE 58th Annual Symposium on Foundations of Computer
  Science ({FOCS})}, pages 319--330, 2017.

\bibitem{BulatovProofCSP}
Andrei~A. Bulatov.
\newblock A dichotomy theorem for nonuniform {CSP}s.
\newblock {\em CoRR}, abs/1703.03021, 2017.

\bibitem{BarnyLICS2015}
C.~Carvalho, F.~Madelaine, and B.~Martin.
\newblock From complexity to algebra and back: Digraph classes, collapsibility,
  and the {P}{G}{P}.
\newblock In {\em 30th Annual ACM/IEEE Symposium on Logic in Computer Science},
  pages 462--474, July 2015.

\bibitem{LICS2015}
Catarina Carvalho, Florent~R. Madelaine, and Barnaby Martin.
\newblock From complexity to algebra and back: digraph classes, collapsibility
  and the {PGP}.
\newblock In {\em 30th Annual IEEE Symposium on Logic in Computer Science
  (LICS)}, 2015.

\bibitem{MFCS2017}
Catarina Carvalho, Barnaby Martin, and Dmitriy Zhuk.
\newblock The complexity of quantified constraints using the algebraic
  formulation.
\newblock In {\em 42nd International Symposium on Mathematical Foundations of
  Computer Science, {MFCS} 2017, August 21-25, 2017 - Aalborg, Denmark}, pages
  27:1--27:14, 2017.

\bibitem{hubie-sicomp}
Hubie Chen.
\newblock The complexity of quantified constraint satisfaction: Collapsibility,
  sink algebras, and the three-element case.
\newblock {\em SIAM J. Comput.}, 37(5):1674--1701, 2008.

\bibitem{AU-Chen-PGP}
Hubie Chen.
\newblock Quantified constraint satisfaction and the polynomially generated
  powers property.
\newblock {\em Algebra universalis}, 65(3):213--241, 2011.
\newblock An extended abstract appeared in ICALP B 2008.

\bibitem{Meditations}
Hubie Chen.
\newblock Meditations on quantified constraint satisfaction.
\newblock In {\em Logic and Program Semantics - Essays Dedicated to Dexter
  Kozen on the Occasion of His 60th Birthday}, pages 35--49, 2012.

\bibitem{QC2017}
Barnaby Martin.
\newblock {Quantified Constraints in Twenty Seventeen}.
\newblock In Andrei Krokhin and Stanislav Zivny, editors, {\em The Constraint
  Satisfaction Problem: Complexity and Approximability}, volume~7 of {\em
  Dagstuhl Follow-Ups}, pages 327--346. Schloss Dagstuhl--Leibniz-Zentrum fuer
  Informatik, Dagstuhl, Germany, 2017.

\bibitem{ZhukFVConjecture}
D.~{Zhuk}.
\newblock A proof of {CSP} dichotomy conjecture.
\newblock In {\em 2017 IEEE 58th Annual Symposium on Foundations of Computer
  Science ({FOCS})}, pages 331--342, Oct 2017.

\bibitem{ZhukGap2015}
Dmitriy {Zhuk}.
\newblock {The size of generating sets of powers}.
\newblock {\em Journal of Combinatorial Theory, Series A}, 167:91--103, 2019.

\bibitem{MyProofCSP}
Dmitriy Zhuk.
\newblock A proof of the {CSP} dichotomy conjecture.
\newblock {\em Journal of the ACM (JACM)}, 67(5):1--78, 2020.

\bibitem{zhuk2020qcsp}
Dmitriy Zhuk and Barnaby Martin.
\newblock {QCSP} monsters and the demise of the {C}hen conjecture.
\newblock In {\em Proceedings of the 52nd Annual ACM SIGACT Symposium on Theory
  of Computing}, pages 91--104, 2020.

\end{thebibliography}
\end{document}